\pgfplotsset{grid style={dashed,gray}}
\pgfplotsset{minor grid style={dotted,gray}}
\pgfplotsset{major grid style={dashed,gray}}
\newcommand*\bigcdot{\mathpalette\bigcdot@{.5}}
\newcommand*\bigcdot@[2]{\mathbin{\vcenter{\hbox{\scalebox{#2}{$\m@th#1\bullet$}}}}}
\def\BState{\State\hskip-\ALG@thistlm}
\newcommand*{\rom}[1]{\expandafter\@slowromancap\romannumeral #1@}
\newcommand{\multiline}[1]{%
  \begin{tabularx}{\dimexpr\linewidth-\ALG@thistlm}[t]{@{}X@{}}
    #1
  \end{tabularx}
}
\newacro{6g} [6G] {Sixth generation}
\newacro{cr} [CR] {cognitive radio}
\newacro{isac} [ISAC] {integrated sensing and communication}
\newacro{sim} [SIM] {stacked intelligent metasurface}
\newacro{sb} [SB] {secondary base station}
\newacro{su} [SU] {secondary user equipment}
\newacro{sws} [SWS] {secondary wireless system}
\newacro{pb} [PB] {primary base station}
\newacro{pu} [PU] {primary user equipment}
\newacro{pws} [PWS] {primary wireless system}
\newacro{ula} [ULA] {uniform linear array}
\newacro{csi} [CSI] {channel state information}
\newacro{em} [EM] {electromagnetic}
\newacro{siso} [SISO] {single input single output}
\newacro{miso} [MISO] {multiple input single output}
\newacro{ra} [RA] {reconfigurable antenna}
\newacro{los} [LoS] {line-of-sight}
\newacro{nlos} [NLoS] {non-line-of-sight}
\newacro{shod} [SHOD] {spherical harmonious orthogonal decomposition}
\newacro{ofdm} [OFDM] {orthogonal frequency division multiplexing}
\newacro{dof} [DoF] {degree of freedom}
\newacro{fim} [FIM] {Fisher information matrix}
\newacro{bfim} [BFIM] {Bayesian Fisher information matrix}
\newacro{em} [EM] {electromagnetics}
\newacro{era} [ERA] {electromagnetically reconfigurable antenna}
\newacro{aod} [AOD] {angle-of-departure}
\newacro{aoa} [AOA] {angle-of-arrival}
\newacro{sp} [SP] {scatter point}
\newacro{ml} [ML] {maximum likelihood}
\newacro{mse} [MSE] {mean square error}
\newacro{snr} [SNR] {signal-to-noise ratio}
\newacro{rmse} [RMSE] {root mean square error}
\newacro{crb} [CRB] {Cram\'er-Rao bound}
\newacro{bcrb} [BCRB] {Bayesian Cram\'er-Rao bound}
\newacro{peb} [PEB] {position error bound}
\newacro{kld} [KLD] {Kullback-Leibler divergence}
\newacro{siso} [SISO] {single-input-single-output}
\newacro{mimo} [MIMO] {multiple-input multiple-output}
\newacro{mcrb} [MCRB] {misspecified Cram\'er-Rao bound}
\newacro{bs} [BS] {base station}
\newacro{ue} [UE] {user equipment}
\newacro{arv} [ARV] {array response vector}
\newacro{upa} [UPA] {uniform planar array}
\newacro{rf} [RF] {radio frequency}
\newacro{ris} [RIS] {reconfigurable intelligent surface}
\newacro{rms} [RMS] {root mean square}
\newacro{awgn} [AWGN] {additive white Gaussian noise}
\newacro{nsga2} [NSGA-II] {non‐dominated sorting genetic algorithm II}
\newacro{de} [DE] {differential evolution}
\newacro{iid} [i.i.d.] {independently
and identically distribute}
\newacro{sinr} [SINR] {signal to interference and noise ratio}
\newacro{sdr} [SDR] {semidefinite relaxation}
\newacro{pdf} [PDF] {probability density function}
\newacro{mpc} [MPC] {multipath component}
\newacro{lmi} [LMI] {linear matrix inequality}
\newacro{psd} [PSD] {positive semi-definite}
\newacro{rem} [REM] {radio environment map}
\newacro{it} [IT] {interference temperature}
\newacro{qos} [QoS] {quality-of-service}
\newacro{ann} [ANN] {artificial neural network}
\newacro{dnn} [DNN] {deep neural network}
\newacro{drl} [DRL] {deep reinforcement learning}
\newacro{ss} [SS] {spectrum sensing}
\newacro{dft} [DFT] {discrete Fourier transform}
\newacro{ao} [AO] {alternating optimization}
\newacro{se} [SE] {spectral efficiency}
\newacro{mmwave} [mmWave] {millimeter-wave}
\newacro{fpga} [FPGA] {field-programmable gate array}
\title{
Stacked Intelligent Metasurfaces for Multicarrier Cognitive Radio ISAC
}
\author{
Alireza Fadakar, \emph{Graduate Student Member} and Andreas F. Molisch, \emph{Fellow, IEEE}\thanks{A. Fadakar and A. F. Molisch are with the Ming Hsieh Department of Electrical and Computer Engineering, University of Southern California, Los Angeles, California, USA; E-mail: \{fadakarg,molisch\}@usc.edu.}
\thanks{This work is supported 
by the National Science Foundation (Grants 2229535 and 2106602).}
}
\theoremstyle{plain}
\newtheoremstyle{iremark}
  {\topsep}   
  {\topsep}   
  {\upshape}  
  {0.2in}       
  {\itshape}  
  {.}         
  {5pt plus 1pt minus 1pt} 
  {\thmname{#1}\thmnumber{ \itshape#2}\thmnote{ (#3)}} 
\newtheorem{theorem}{Theorem}
\newtheorem{lemma}[theorem]{Lemma}
\newtheorem{remark}{Remark}
\newtheorem{proposition}{Proposition}
\theoremstyle{definition}
\newtheorem*{proof}{Proof}
\newcommand*\rel@kern[1]{\kern#1\dimexpr\macc@kerna}
\newcommand*\widebar[1]{%
  \begingroup
  \def\mathaccent##1##2{%
    \rel@kern{0.8}%
    \overline{\rel@kern{-0.8}\macc@nucleus\rel@kern{0.2}}%
    \rel@kern{-0.2}%
  }%
  \macc@depth\@ne
  \let\math@bgroup\@empty \let\math@egroup\macc@set@skewchar
  \mathsurround\z@ \frozen@everymath{\mathgroup\macc@group\relax}%
  \macc@set@skewchar\relax
  \let\mathaccentV\macc@nested@a
  \macc@nested@a\relax111{#1}%
  \endgroup
}
\begin{document}
\maketitle
\begin{abstract}
The fusion of \ac{cr} and \ac{isac}, enabled by \acp{sim}, offers a promising path for multi-functional programmable front ends in 6G and beyond. 
In this paper we propose a novel \ac{cr}-\ac{isac} framework that leverages an \ac{sim} integrated with the \ac{sb} to learn and realize optimal beampatterns that simultaneously (i) minimize the \ac{bcrb} for localizing a \ac{su} and (ii) limit averaged interference at \acp{pu} so that spectral efficiency loss is constrained, with the target of at most a few percent degradation. 
We propose an efficient alternating optimization-based algorithm to obtain the optimal end-to-end transmission response of the SIM for all \ac{ofdm} subcarriers. 
Drawing an analogy between the layered SIM architecture and deep neural networks, we define a beampattern-matching loss, derive analytical gradients for backpropagation, and implement a learning-based optimization of the SIM coefficients using a mini-batch Adam optimizer.
A complexity analysis is provided, and extensive numerical experiments are performed to evaluate the proposed \ac{cr}-\ac{isac} framework. 
The results show that the proposed \ac{sim} coefficient optimization methods attain near-optimal performance in terms of both the \ac{su} \ac{bcrb} localization metric and the \acp{pu} average spectral efficiency when the \ac{sim} has a sufficient number of layers, and they substantially outperform traditional single-layer \ac{ris} designs.
\end{abstract}
\acresetall

\begin{IEEEkeywords}
Cognitive radio, deep learning, integrated sensing and communication (ISAC), localization, OFDM, reconfigurable intelligent surface (RIS), stacked intelligent metasurface (SIM).
\end{IEEEkeywords}
\bstctlcite{IEEEexample:BSTcontrol}
\section{Introduction}
Rising mobile traffic and the growing demand for higher wireless data rates have intensified radio-spectrum scarcity, posing a major challenge to the deployment of \ac{6g} and beyond communications \cite{Yang2019_6G,tataria20216,Yuan2025TTR_CR}. 
\Ac{cr} has therefore been widely adopted as an effective solution to mitigate spectrum shortage by enabling \acp{sws} (consisting of \acp{sb} and \acp{su}) to opportunistically access licensed bands, provided such access does not cause harmful interference to (typically incumbent) \acp{pws}  \cite{Liang2011CR, Xu2025CR}, \cite[Chap. 27]{molisch2026wireless}.

Spectrum sharing is typically classified into three modes: interweave, overlay, and underlay \cite{Wang2011CR_Advances, Yuan2025TTR_CR}, \cite[Chap. 27]{molisch2026wireless}. 
In the interweave mode, the \ac{sws} opportunistically exploits spectral holes that are temporally unused by the licensed \ac{pws}. 
In the overlay mode, the \ac{sws} obtains access opportunities by assisting the \ac{pws} transmission. 
In the underlay mode, the \ac{sws} may transmit concurrently with the \ac{pws} provided that the resulting performance degradation in terms of \ac{qos} at each \ac{pws} receiver remains below a predefined threshold. 
This is often achieved by requiring the  
\ac{it} to remain below a particular threshold \cite{Tanab2017Resource}. 
In this context, beamforming is essential to enhance \ac{qos} of \ac{sws} communication and sensing  while ensuring adequate protection of the \ac{pws} by limiting interference.
This paper focuses on the underlay mode due to its potential for continuous access and superior spectrum efficiency. 

On the other hand, with the rapid evolution of \ac{6g} systems, advanced enablers such as \ac{isac}, \acp{ris}, and \acp{sim}, among others, have been proposed to enhance certain channel characteristics  \cite{fadakar2025mutual, An2024SIM_DOA}. 
In particular, \ac{isac} has recently received significant attention as an efficient solution for delivering joint sensing and communication services while improving spectrum usage \cite{Nuria2024ISAC, fadakar2025mutual}. 
Furthermore, \acp{ris} and \acp{sim} can assist \ac{isac} by dynamically shaping channel amplitude and phase components to create favorable links for both communication and sensing tasks \cite{An2025sim_downlink, Niu2024SIM,fadakar2025mutual}.
Integrating these technologies with established approaches such as \ac{cr} can further improve overall \ac{6g} network performance \cite{Liu2021RISs}. 
Specifically, an \ac{ris} comprises a large number of low-cost, nearly passive reflecting elements with tunable reflection coefficients. 
Operating in a largely passive mode, \acp{ris} can be readily integrated into existing cellular and WiFi infrastructures with minimal power and signaling overhead \cite{Liu2021RISs}. 
These properties have stimulated extensive research on \ac{ris}-assisted techniques aimed at improving \ac{qos} across a wide range of communication scenarios \cite{Pan2022RIS_overview, fascista2022ris, fadakar2024multi, fadakar2025mutual, fadakar2025near}. 
However, most existing studies consider single-layer metasurfaces that provide primarily phase-only control compared to a fully digital array of the same aperture and element count, which is capable of joint amplitude and phase weighting. 
This limitation reduces the optimization degrees of freedom and prevents \acp{ris} from reproducing the full set of beampatterns achievable with fully digital arrays.

These limitations have motivated the recent investigation of \ac{sim} hardware \cite{An2023SIM_muser, An2023SIM_Holog_MIMO, Liu2025_MISO_SIM, Shi2025joint, Li2025SIM_NF, Niu2024SIM}. 
By stacking multiple programmable transmissive metasurface layers, an \ac{sim} realizes a \ac{dnn}-like architecture that offers substantially richer signal-processing capabilities than single-layer \ac{ris} implementations. 
Moreover, signal propagation within an \ac{sim} is governed by \ac{em} waves (i.e., effectively at the speed of light), whereas the execution speed of conventional \acp{dnn} is constrained by electronic processing hardware and its associated latency.
A key distinction between \acp{sim} and conventional \acp{dnn} is that \acp{sim} do not employ element-wise nonlinear activation functions, operating instead via cascaded linear electromagnetic layers \cite{An2024SIM_DOA,liu2022programmable}. 
Hardware prototypes of \acp{sim} have been demonstrated in \cite{liu2022programmable,liu2023full}.

\subsection{Related Works}
\subsubsection{RIS-aided CR Networks}
In recent years, the integration of \ac{ris} into \ac{cr} systems has been extensively studied \cite{Yuan2025TTR_CR, Dong2025joint_secure, Yuan2021IRS_CR, Zhong2022DRL_IRS_CR, Ge2024RIS_CSS_CR, Hui2024RIS_CR_HWI, Zhao2023RIS_CR, Wu2022secure_RIS_CR, Xu2025CR}.
In \cite{Yuan2025TTR_CR}, a two-timescale resource-allocation framework for \ac{ris}-aided \ac{cr} is proposed to maximize the ergodic rate of the \ac{su} subject to an average \ac{it} constraint at the \ac{pu} receiver and an average transmit-power constraint at the \ac{su} transmitter. 
The work \cite{Dong2025joint_secure} develops a joint design to strengthen secure transmission for both \acp{pu} and \acp{su} in an \ac{ris}-assisted \ac{cr} network in the presence of an eavesdropper. 
Joint optimization of the \ac{su} beamforming vector and the \ac{ris} phase-shift matrix has been investigated in \cite{Yuan2021IRS_CR} and \cite{Zhao2023RIS_CR}, where \cite{Yuan2021IRS_CR} maximizes the \ac{su} achievable rate and \cite{Zhao2023RIS_CR} minimizes the \ac{su} transmit power.
To reduce the high computational burden of conventional optimization, \cite{Zhong2022DRL_IRS_CR} proposes a \ac{drl} approach to maximize the \ac{su} rate in \ac{ris}-assisted \ac{cr} systems. 
Robust secure beamforming for \ac{mmwave} \ac{ris}-aided \ac{cr} is developed in \cite{Wu2022secure_RIS_CR}, targeting the maximization of the worst-case secrecy rate. 
From the \ac{ss} perspective, \cite{Ge2024RIS_CSS_CR} maximizes the cooperative detection probability in \ac{ris}-enabled sensing networks to improve sensing performance. 
Furthermore, \cite{Hui2024RIS_CR_HWI} addresses practical hardware impairments and formulates a design that maximizes the average achievable sum rate in multiuser \ac{ris}-aided \ac{cr} systems.
Finally, in \cite{Xu2025CR}, mobile-sensor \ac{sinr} values are maximized via joint optimization of the \ac{ss} time, \ac{sb} beamforming, and \ac{ris} beamforming, subject to \ac{ss} and secondary-transmission constraints.

\subsubsection{RIS- and SIM-aided Localization and Sensing}
Recent studies have investigated \ac{ris}-assisted localization \cite{fascista2022ris,fadakar2024multi,fadakar2025mutual,fadakar2025near}. 
In \cite{fascista2022ris}, a codebook-based joint localization and synchronization scheme is developed by jointly optimizing the \ac{bs} active precoder and the \ac{ris} passive phase profiles. 
The approach in \cite{fadakar2024multi} proposes a two-stage, low-complexity solution for joint localization and synchronization using multiple \acp{ris}, where a deep-learning-based \ac{aod} estimator is employed in the first stage. 
The works in \cite{fadakar2025mutual,fadakar2025near} consider joint estimation of the \ac{ue} position and mutual-coupling coefficients in multipath environments, addressing the far-field and near-field cases, respectively.

Wireless sensing using \acp{sim} has recently attracted interest \cite{An2024SIM_DOA,Wang2024SIM_ISAC,Li2025ISAC_SIM}. In \cite{An2024SIM_DOA}, an \ac{sim} is exploited for uplink \ac{aoa} estimation by configuring the stacked metasurfaces to implement a 2D \ac{dft} on the incident field. 
As waves propagate through the \ac{sim}, the receiver array directly observes the angular spectrum, and \ac{aoa} is estimated by probing the resulting energy distribution across the array.
In \cite{Wang2024SIM_ISAC}, the \ac{bs} beamformer and the \ac{sim}'s end-to-end transmission matrix are jointly optimized to estimate the complete target-response matrix, while satisfying minimum \ac{sinr} at communication users and transmit-power limits. 
Furthermore, \cite{Li2025ISAC_SIM} proposes a transmit-beamforming design for \ac{isac} using a \ac{sim} by performing beamforming in the passive wave domain, which reduces hardware complexity and power consumption.

\subsubsection{Gaps and Motivation}
The major gaps in the existing literature can be summarized as follows:
\begin{itemize}
\item 
Although the benefits of \ac{ris} in \ac{cr} networks have been demonstrated \cite{Yuan2025TTR_CR,Dong2025joint_secure,Yuan2021IRS_CR,Zhong2022DRL_IRS_CR,Ge2024RIS_CSS_CR,Hui2024RIS_CR_HWI,Zhao2023RIS_CR,Wu2022secure_RIS_CR,Xu2025CR}, the existing works are mostly communication-centric and \ac{isac} applications within \ac{cr} particularly localization remain underexplored.
\item 
Current \ac{sim} research is predominantly single-functional, focusing on either communication or sensing, and thus the potential of \acp{sim} for multi-functional tasks such as \ac{isac}, localization, and \ac{cr} has received little attention.
\item 
Algorithmic and beamforming solutions developed for single-layer \ac{ris} do not generalize to multi-layer \ac{sim} architectures. 
In particular, to the best of the authors' knowledge, the advantages of \acp{sim} in \ac{cr} has not yet been studied, motivating the development of novel \ac{sim}-specific algorithms and designs.
\end{itemize}

Motivated by these gaps, this paper studies a multicarrier wideband \ac{cr}-\ac{isac} system under the underlay scenario and optimizes the \ac{sim} coefficients to maximize \ac{su} localization performance while ensuring the average \ac{it} at \ac{pu}s remains below a predefined threshold at each \ac{ofdm} subcarrier.

\subsection{Contributions}
The main contributions of this paper are summarized as follows:
\begin{itemize}
\item 
We study a \ac{cr}-\ac{isac} system in a wideband \ac{ofdm} scenario assisted by a multi-functional \ac{sim}. 
Specifically, we formulate an optimization problem to design the \ac{sim} coefficients that maximize \ac{su} downlink localization performance while satisfying \ac{qos} requirement imposed on the \acp{pu}.
\item 
We derive the \ac{fim} for the downlink observations and obtain the corresponding \ac{bcrb} assuming a prior on the \ac{su} position. The \ac{bcrb} is then used as the localization metric to design the \ac{sim} coefficients.
\item 
We formulate the \ac{sim}-coefficient design as an optimization problem that minimizes the \ac{bcrb} of the \ac{su} position, subject to a constraint that the average \ac{se} degradation at the \acp{pu} does not exceed a predefined threshold. 
The resulting nonconvex problem is converted to a tractable max-min problem via convex optimization techniques.
\item 
We develop an efficient \ac{ao}-based algorithm to solve the max-min reformulation and obtain the optimal end-to-end transmission response of the \ac{sim} for all \ac{ofdm} subcarriers. 
Specifically, the inner minimization is solved via a bisection-based method, while a closed-form solution is derived for the outer maximization.
\item 
Exploiting the analogy between \ac{sim} architectures and \acp{dnn}, we first derive the analytical gradients required for backpropagation and then propose a learning-based scheme that uses the Adam optimizer to tune the \ac{sim} coefficients. 
A beampattern-based loss function is defined to drive the learned end-to-end response toward the previously obtained optimal solution. 
Extensive simulations demonstrate convergence of the proposed algorithms and the training procedure.
Moreover, the effectiveness of the proposed methods are validated for \ac{su} localization, \ac{it} management, or equivalently preservation of average \ac{se} at the \acp{pu}.
\end{itemize}

\paragraph*{Notation}
Throughout this paper, boldface uppercase letters (e.g., $\mathbf{X}$) denote matrices and boldface lowercase letters (e.g., $\mathbf{x}$) denote vectors; the superscripts $(\cdot)^{\mathsf{T}}$, $(\cdot)^{\mathsf{H}}$, and $(\cdot)^{-1}$ indicate transpose, Hermitian (conjugate transpose), and matrix inverse, respectively. 
We use $\Re\{\cdot\}$ and $\Im\{\cdot\}$ to denote the real and imaginary parts of a complex-valued quantity, respectively.
Horizontal concatenation is written as $[\mathbf{x}_1,\ldots,\mathbf{x}_n]$, $\mathrm{diag}(\mathbf{x})$ denotes the diagonal matrix whose main diagonal entries are those of $\mathbf{x}$, and $\mathbf{I}_n$ is the $n\times n$ identity matrix. We use $\lVert\mathbf{x}\rVert$ for the Euclidean norm and $\lVert\mathbf{A}\rVert_F$ for the Frobenius norm. 
$\mathbf{A}\odot\mathbf{B}$ an $\mathbf{A}\otimes\mathbf{B}$ denote the Hadamard and Kronecker products, respectively. 
The function $\operatorname{arctan2}(x,y)$ refers to the four-quadrant inverse tangent that returns the principal angle of the point $(x,y)$ in $(-\pi,\pi]$. For a real vector $\mathbf{x}=[x_1,\dots,x_n]^{\mathsf{T}}$, $e^{\mathbf{x}}$ denotes the elementwise exponential $[e^{x_1},\dots,e^{x_n}]^{\mathsf{T}}$. 
The inner product between two matrices is written as $\mathbf{A}\bullet\mathbf{B}=\mathrm{tr}(\mathbf{A}^{\mathsf{H}}\mathbf{B})$.
Finally, $\lambda_{\mathrm{max}}(\mathbf{A})$ denotes the largest eigenvalue of $\mathbf{A}$.

\vspace{-0.2cm}
\section{System Model} 
The \ac{cr}-\ac{isac} architecture under study is illustrated in Fig.~\ref{fig:system-model}. 
A single‐antenna \ac{pb} serves $N_{\text{pu}}$ single‐antenna \acp{pu}\footnote{The choice of a single-antenna \ac{pb} and \acp{pu} is made solely for notational simplicity and exposition. 
No assumptions are imposed on the internal structure of the \ac{pws}. 
Extension of the \ac{pws} components to multi-antenna architectures is straightforward and does not alter the proposed methodologies.} in the downlink, while a single‐antenna \ac{sb} integrated with an 
\ac{sim}, thus creating a reconfigurable antenna, simultaneously provides sensing via dedicated downlink pilot transmissions for \acp{su} while keeping the average \ac{se} of the \acp{pu} above a threshold. 
We assume that the relevant \ac{csi} namely, the \ac{pb}-\acp{pu} and \ac{sb}-\acp{pu} channels have been estimated and exchanged over a coordinated \ac{cr} control unit \cite{Xu2025CR}. 


\begin{figure}[!h]
\centering
\includegraphics[width=0.8\columnwidth]{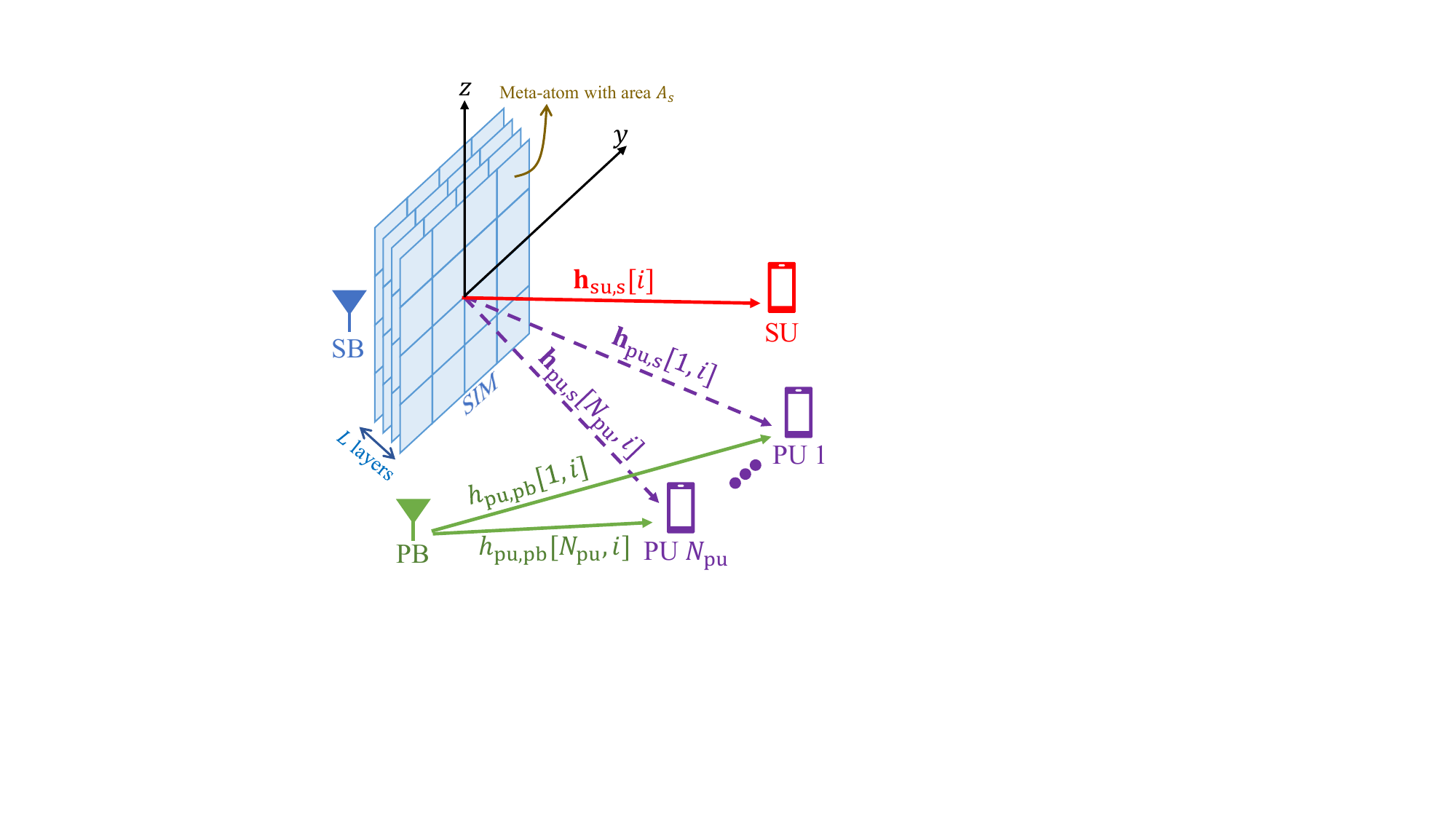}
\caption{
The considered \ac{sim}-aided \ac{cr}-\ac{isac} system. 
We aim to optimize the \ac{sim} coefficients to maximize the localization performance at the \ac{su} while keeping the averaged \ac{se} at the \acp{pu} above a threshold.
}
\label{fig:system-model}
\end{figure}

The \ac{sb} emits $I$ \ac{ofdm} downlink symbols through the \ac{sim} for localizing a single‐antenna \ac{su}. 
Moreover, the \ac{sim} is connected to a smart controller, such
as a customized \ac{fpga}, which is capable of independently
adjusting the phase shift of the \ac{em} waves transmitted through
each meta-atom \cite{An2025sim_downlink}.
Our goal is to optimize the \ac{sim} phase coefficients so as to maximize localization performance, while keeping the resulting interference leakage to the \acp{pu} under a specific limit. 
Because the downlink localization procedure can be carried out independently at each \ac{su}, in this paper we focus on the single \ac{su} case without loss of generality. 

\vspace{-0.5cm}
\subsection{Signal Model}
\subsubsection{SIM Signal Models}
The considered \ac{sim} consists of an
array of $L$ metasurfaces, each having a \ac{upa} shaped containing $N=N^hN^v$ meta-atoms with $N^h$ rows and $N^v$ columns. 
The phase shift matrix for the $\ell$-th \ac{sim} layer is denoted by 
\begin{equation}
\boldsymbol{\Phi}^{(\ell)}
=
\mathrm{diag}(
[e^{j\phi_1^{(\ell)}}, \dots, e^{j\phi_N^{(\ell)}}]^\mathsf{T}
)\in \mathbb{C}^{N\times N},
\end{equation}
where $\phi_n^{(\ell)}$ denotes the phase shift of the $n$-th meta-atom on the $\ell$-th metasurface layer.
For $\ell=2,\dots ,L$, let $\mathbf{W}^{(\ell)}[i]\in\mathbb{C}^{N\times N}$ represent the transmission matrix from the $(l-1)$-th to the $l$-th metasurface layer in the $i$-th subcarrier. 
According to the Rayleigh-Sommerfeld diffraction theory, the $(n,n')$-th entry of the wideband transmission matrix of the \ac{sim} is expressed as\cite{An2025sim_downlink, An2024SIM_DOA, An2023SIM_Holog_MIMO}:
\begin{equation}\label{eq:Sommerfeld_formula}
[\mathbf{W}^{(\ell)}[i]]_{n,n'}
=
\frac{
A_sd_s
}{(d_{n,n'}^{(\ell)})^2}
\left(
\frac{1}{2\pi d_{n,n'}^{(\ell)}}
-
j\frac{f_i}{c}
\right)
e^{j2\pi d_{n,n'}^{(\ell)}f_i/c},
\end{equation}
where $A_s$ is the area of each meta-atom, $d_s$ is the inter-layer distance in the \ac{sim}, $f_i=f_c-(i-1)\Delta f$ is the frequency of the $i$-th subcarrier (with $f_c$ the carrier frequency and $\Delta f$ the subcarrier spacing), and $c$ is the speed of light.
The inter-layer distance between the $n$-th meta-atom of the $l$-th layer and the $n'$-th meta-atm of the $(l-1)$-th layer is represented as $d_{n,n'}^{(\ell)}$.
Finally, the cumulative effect of signal propagation through the layers
of the \ac{sim} in the $i$-th subcarrier is characterized by $\mathbf{G}[i]\in \mathbb{C}^{N\times N}$ obtained as \cite{li2025stacked_ofdm, An2025sim_downlink}:
\begin{equation}\label{eq:sim_end_to_end}
\mathbf{G}[i]
=
\boldsymbol{\Phi}^{(L)}
\mathbf{W}^{(L)}[i]
\dots 
\boldsymbol{\Phi}^{(2)}
\mathbf{W}^{(2)}[i]
\boldsymbol{\Phi}^{(1)}.
\end{equation}

\subsubsection{Received Downlink Signal Models}
The received signal at the \ac{su} and the $r$-th \ac{pu} over the $i$-th subcarrier are given by:
\begin{subequations}\label{def:received_sigs}
\begin{align}\label{eq:SUE-received-sig}
y_{\text{su}}[i]
& =
\sqrt{P_\text{sb}}
\mathbf{h}_{\text{su,s}}^{\mathsf{T}}[i]
\mathbf{G}[i]
\mathbf{w}[i] 
+
v_{\text{su}}[i],
\\
y_{\text{pu}}[r,i]
& =
\sqrt{P_\text{pb}}
h_{\text{pu,pb}}[r,i]
s[r,i] \notag \\
& +
\sqrt{P_\text{sb}}
\mathbf{h}_{\text{pu,s}}[r,i]^\mathsf{T}
\mathbf{G}[i]
\mathbf{w}[i]
+
v_{\text{pu}}[r,i],
\end{align}
\end{subequations}
where $\mathbf{w}[i] \in \mathbb{C}^{N}$ is the transmission vector from the \ac{sb} to the first layer of the \ac{sim}.
Moreover, $P_\text{sb}$ and $P_\text{pb}$ represent the total transmitted power from the \ac{sb} and \ac{pb}, respectively. 
The communication symbol transmitted from the \ac{pb} to the $r$-th \ac{pu} in the $i$-th subcarrier is denoted by $s[r,i]$, which satisfies the unit-power condition $\mathbb{E}[\lvert s[r,i]\rvert^2] = 1$.
$v_{\text{su}}[i] \sim \mathcal{CN}(0,\sigma^2)$ denotes the joint contribution of the \ac{awgn} and the interference from the \ac{pb} at the \ac{su}, and 
$v_{\text{pu}}[r,i] \sim \mathcal{CN}(0,\sigma^2_v)$ denote the \ac{awgn} at the $r$-th \ac{pu}.
Additionally, 
$\mathbf{h}_{\text{su,s}}[i]\in\mathbb{C}^{N}$, 
$h_{\text{pu,pb}}[r,i]\in\mathbb{C}$, and
$\mathbf{h}_{\text{pu,s}}[r,i]\in\mathbb{C}^{N}$
denote the \ac{su}-\ac{sim}, \ac{pu}-\ac{pb}, and \ac{pu}-\ac{sim} 
channels, respectively which will be explained in detail in the following subsection. 
\vspace{-0.2cm}
\subsection{Channel Model}
A widely used approach for representing the wideband \ac{ofdm} channel model is the frequency-domain form, which simplifies beamspace processing via the IFFT/FFT
\cite{li2025stacked_ofdm, Heath2016Overview}.
Assuming a block-fading geometric channel model \cite{You2021Wireless, Chen2024fluid} between any two components of the considered \ac{cr}-\ac{isac} system, the channels are modeled as follows:
\begin{subequations}\label{def:channels}
\begin{align}
\mathbf{h}_{\text{su,s}}[i]
& =
\overline{\mathbf{h}}_{\text{su,s}}[i]
+
\sum_{q=1}^{Q_{\text{su,s}}}
\tilde{\mathbf{h}}_{\text{su,s}}^{(q)}[i]
,
\label{ch:su_s}
\\
\mathbf{h}_{\text{pu,s}}[r,i]
& =
\overline{\mathbf{h}}_{\text{pu,s}}[r,i]
+
\sum_{q=1}^{Q_{\text{pu,s}}}
\tilde{\mathbf{h}}_{\text{pu,s}}^{(q)}[r,i],
\label{ch:pu_s}
\\
h_{\text{pu,pb}}[r,i]
& =
\overline{h}_{\text{pu,pb}}[r,i]
+
\sum_{q=1}^{Q_{\text{pu,pb}}}
\tilde{h}_{\text{pu,pb}}^{(q)}[r,i],
\label{ch:pu_pb}
\end{align}
\end{subequations}
where $\mathbf{h}_{\text{su,s}}[i]\in\mathbb{C}^{N}$ denotes the channel between the \ac{sim} and \ac{su}, 
$\mathbf{h}_{\text{pu,s}}[r,i]\in\mathbb{C}^{N}$ is the channel between the \ac{sim} and the $r$-th \ac{pu}, and 
$h_{\text{pu,pb}}[r,i]$ is the channel between the \ac{pb} and the $r$-th \ac{pu}. 
The index $i$ corresponds to the $i$-th subcarrier. 
Moreover, $\overline{\mathbf{h}}_{\text{su,s}}[i]$, $\overline{\mathbf{h}}_{\text{pu,s}}[r,i]$, and $\overline{h}_{\text{pu,pb}}[r,i]$ denote the \ac{los} terms of these channels, respectively.
Likewise, $\widetilde{\mathbf{h}}_{\text{su,s}}^{(q)}[i]$, $\widetilde{\mathbf{h}}_{\text{pu,s}}^{(q)}[r,i]$, and $\widetilde{h}_{\text{pu,pb}}^{(q)}[r,i]$ represent the \ac{nlos} components through the $q$-th \ac{mpc}. 
Finally, $Q_{\text{su,s}}$, $Q_{\text{pu,s}}$, and $Q_{\text{pu,pb}}$ are the number of such \ac{nlos} paths in the corresponding channels. 

The \ac{los} terms of the channels are defined as follows:
\begin{subequations}\label{def:los_channels}
\begin{align}
\label{def:los_h_su_s}
\overline{\mathbf{h}}_{\text{su,s}}[i]
& =
\alpha_{\text{su,s}}
e^{-j \zeta_i \tau_{\text{su,s}}}
\mathbf{a}_{i}(\boldsymbol{\theta}_{\text{su,s}}),
\\ 
\overline{\mathbf{h}}_{\text{pu,s}}[r,i]
& =
\alpha_{\text{pu,s}}[r]
e^{-j \zeta_i \tau_{\text{pu,s}}[r]}
\mathbf{a}_{i}(\boldsymbol{\theta}_{\text{pu,s}}[r]),
\\
\overline{h}_{\text{pu,pb}}[r,i]
& =
\alpha_{\text{pu,pb}}[r,i]
e^{-j \zeta_i \tau_{\text{pu,pb}}[r]},
\end{align}
\end{subequations}
where 
$\alpha_{\text{su,s}}$, $\alpha_{\text{pu,s}}[r]$, and $\alpha_{\text{pu,pb}}[r,i]$, denote the corresponding overall complex gain of the channels.
In addition, $\zeta_i=2\pi\,(i-1)\Delta f$, and the delay $\tau_{u,v}$ between two components $u,v$ is defined as 
$\tau_{u,v}
=
\lVert \mathbf{p}_{u}-\mathbf{p}_{v}\rVert / c$.
In \eqref{def:los_channels}, $\boldsymbol{\theta}_\text{su,s}=[\theta_\text{su,s}^{\text{el}}, \theta_\text{su,s}^{\text{az}}]^\mathsf{T}$ and $\boldsymbol{\theta}_\text{pu,s}[r]=[\theta_\text{pu,s}^{\text{el}}[r], \theta_\text{pu,s}^{\text{az}}[r]]^\mathsf{T}$ denote the 2D-\ac{aod} from the last layer of the \ac{sim} towards the the \ac{su} and the $r$-th \ac{pu}, respectively. 
The angles
$\theta_\text{su,s}^{\text{el}}, \theta_\text{su,s}^{\text{az}}$ are related to the positions of these components as follows:
\begin{align}\label{eq:theta_phi_vals}
& 
\theta_\text{su,s}^{\text{el}}
=
\text{arccos} \left(
\frac{[\mathbf{p}_{\text{su;s}}]_{3}}{\lVert\mathbf{p}_{\text{su;s}}\rVert}
\right),\ 
\theta_\text{su,s}^{\text{az}}
=
\text{arctan2} (
[\mathbf{p}_{\text{su;s}}]_{2},[\mathbf{p}_{\text{su;s}}]_{1}
),
\end{align}
where 
$\mathbf{p}_\text{su;s}=\mathbf{R}_\text{s}^\mathsf{T}(\mathbf{p}_\text{su}-\mathbf{p}_\text{s})$ 
denotes the local coordinates of 
$\mathbf{p}_\text{su}$ 
with respect to the \ac{sim}. 
Similar equations can be obtained for $\theta_\text{pu,s}^{\text{el}}[r], \theta_\text{pu,s}^{\text{az}}[r]$.
The \ac{arv} of the last layer of the \ac{sim} at the $i$-th subcarrier, denoted by $\mathbf{a}_i(\boldsymbol{\theta})$, is defined as:
\begin{equation}
\mathbf{a}_i(\boldsymbol{\theta})
=
e^{-j2\pi\,\omega_i^{h}\,\mathbf{k}(N^h)}
\otimes
e^{-j2\pi\,\omega_i^{v}\,\mathbf{k}(N^v)},
\end{equation}
where $\mathbf{k}(N) = [0, \dots, N-1]^\mathsf{T}$, and $\omega_i^{h}$ and $\omega_i^{v}$ represent the spatial horizontal and vertical frequencies for the $i$-th subcarrier, respectively.  
As shown in Fig.~\ref{fig:system-model}, if the \ac{sim} \ac{upa}-shaped layers are positioned parallel to the $y$-$z$ plane of the local Cartesian coordinate system of the \ac{sb}, then  
\begin{equation}
\omega_i^{h}
=
d
\sin(\theta^{\text{az}})
\sin(\theta^{\text{el}})/\lambda_i
,\ 
\omega^{v}
=
d
\cos(\theta^{\text{el}})/\lambda_i,
\end{equation}
where $d$, denotes the distance between two adjacent meta-atoms in each \ac{sim} layer, and $\lambda_i=c/f_i$ is the wavelength of the $i$-th subcarrier frequency. 
Moreover, $\theta^{\mathrm{az}},\theta^{\mathrm{el}}$ represent the azimuth and elevation angles in the \ac{sb}’s local Cartesian coordinates. 
Azimuth $\theta^{\mathrm{az}}\in[-\pi,\pi)$ is measured from $+y$ toward $+x$, and elevation $\theta^{\mathrm{el}}\in[0,\pi]$ is the angle with respect to $+z$ direction.
All parameters associated with the \ac{nlos} paths in \eqref{def:channels} are defined in a similar manner using single-point scatterers.
To focus on the main tasks of the paper, we assume that the channels at the \acp{pu} i.e., $h_{\text{pu,pb}}[r,i]$ and $\mathbf{h}_{\text{pu,s}}[r,i]$ are estimated and known at the \ac{sb} for the design of the \ac{sim} phase coefficients.

\vspace{-0.2cm}
\section{Lower Bound Analysis and Problem Formulation}
In this section, we perform \ac{fim} analysis to derive the \ac{crb} values for the channel and state parameters, which play a crucial role in designing the \ac{sim} phase coefficients. 
Note that the \ac{nlos} components via the \acp{mpc} in the \ac{su}-\ac{sim} channel of \eqref{ch:su_s} do not contribute to the theoretical localizability of the \ac{su} \cite{Chen2024Multi, fadakar2025hybrid, fadakar2025mutual}, and they may degrade performance by producing unresolvable paths.
Accordingly, the lower bounds presented in this paper are derived only considering the \ac{los} path between the \ac{sim} and the \ac{su}, rather than the full model in \eqref{eq:SUE-received-sig}, \eqref{ch:su_s}.
\subsubsection{CRB in Channel Domain}
In this subsection, we compute the standard \ac{fim} of the unknown channel parameter vector 
$\bm{\eta}=[
\theta_{\text{su,s}}^{\text{el}}, 
\theta_{\text{su,s}}^{\text{az}},
\tau_{\text{su,s}},
\rho_{\text{su,s}},\varphi_{\text{su,s}}
]^\mathsf{T}
$ where $\rho_{\text{su,s}},\varphi_{\text{su,s}}$ are the magnitude and phase components of the complex path gain $\alpha_{\text{su,s}}$. 
Since the observations in \eqref{eq:SUE-received-sig} are complex Gaussian, we utilize Slepian-Bangs formula to obtain the \ac{fim} $\mathbf{J}_{\bm{\eta}}\in\mathbb{R}^{5\times 5}$ \cite{fascista2022ris,fadakar2025hybrid}. 
Specifically, the $(u,w)$-th element can be computed as:
\begin{equation}\label{eq:FIM_ch_def}
[\mathbf{J}_{\bm{\eta}}]_{u,w}
=
\frac{2}{\sigma^2}
\sum_{i=1}^{I}
\Re
\bigg\{
\left(
\frac{\partial x_{\text{su}}[i]}{\partial [\bm{\eta}]_{u}}
\right)^{*}
\left(
\frac{\partial x_{\text{su}}[i]}{\partial [\bm{\eta}]_{w}}
\right)
\bigg\}
,
\end{equation}
where $x_{\text{su}}[i]$ denotes the noise- and multipath-free component of \eqref{eq:SUE-received-sig}, which can be represented as:
\begin{equation}
x_{\text{su}}[i]
=
\mathbf{c}[i]^\mathsf{T}
\mathbf{f}[i],
\end{equation}
where $\mathbf{f}[i]\in \mathbb{C}^N$ is the tunable end-to-end response of the \ac{sim} in the $i$-th subcarrier:
\begin{equation}\label{def:f}
\mathbf{f}[i]
=
\mathbf{G}[i]
\mathbf{w}[i],
\end{equation}
and 
$\mathbf{c}[i]\in \mathbb{C}^N$ is an uncontrollable vector proportional to the \ac{los} channel.
are defined as follows:
\begin{align}
\mathbf{c}[i]
& =
\sqrt{P_\text{sb}}\,
\overline{\mathbf{h}}_{\text{su,s}}[i]
=
\sqrt{P_\text{sb}}
\alpha_{\text{su,s}}
e^{-j \zeta_i \tau_{\text{su,s}}}
\mathbf{a}_{i}(\boldsymbol{\theta}_{\text{su,s}}).
\end{align}
Thus, the \ac{fim} entries in \eqref{eq:FIM_ch_def} can be represented as:
\begin{align}\label{eq:fim_ch_detailed}
[\mathbf{J}_{\bm{\eta}}]_{u,w}
=
\frac{2}{\sigma^2}
\sum_{i=1}^{I}
\Re\big\{
\mathbf{f}[i]^{\mathsf{H}}
\mathbf{c}_{u}[i]^*
\mathbf{c}_{w}[i]^\mathsf{T}
\mathbf{f}[i]
\big\},
\end{align}
where 
$
\mathbf{c}_{u}
=
\partial \mathbf{c}
/
\partial [\bm{\eta}]_{u}.
$
\subsubsection{CRB in State Domain}
To obtain the \ac{fim} $\mathbf{J}_{\bm{\gamma}}\in\mathbb{R}^{5\times 5}$ for the state domain vector 
$\bm{\gamma}=[\mathbf{p}_\text{su}^\mathsf{T},
\rho_{\text{su,s}},\varphi_{\text{su,s}}
]^\mathsf{T}$, 
we use the Jacobian transformation matrix $\mathbf{T}=\frac{\partial \bm{\eta}^\mathsf{T}}{\partial \bm{\gamma}}\in \mathbb{R}^{5\times 5}$ as 
$\mathbf{J}_{\bm{\gamma}}=\mathbf{T}\mathbf{J}_{\bm{\eta}}\mathbf{T}^\mathsf{T}$. 
Thus, using \eqref{eq:fim_ch_detailed}, 
the $(u,w)$-th element of $\mathbf{J}_{\bm{\gamma}}$ can be represented as:
\begin{align}\label{eq:fim_state_detailed}
[\mathbf{J}_{\bm{\gamma}}]_{u,w}
=
\frac{2}{\sigma^2}
\sum_{i=1}^{I}
\Re\big\{
\mathbf{f}[i]^{\mathsf{H}}
\tilde{\mathbf{c}}_{u}[i]^*
\tilde{\mathbf{c}}_{w}[i]^\mathsf{T}
\mathbf{f}[i]
\big\}
\end{align}
where
\begin{align}
\widetilde{\mathbf{c}}_{u}[i]
=
\sum_{j=1}^{5} 
[\mathbf{T}]_{u,j}
\mathbf{c}_{j}[i].
\end{align}

\subsubsection{Bayesian CRB in State Domain}
Assuming a prior \ac{pdf} $p(\boldsymbol{\gamma})\in\mathbf{R}^{5}$ 
for the state parameter vector $\boldsymbol{\gamma}$, 
the \ac{bfim} is defined as follows:
\begin{equation}\label{def:J_B}
\mathbf{J}_{B}
=
\mathbf{J}_{D}
+
\mathbf{J}_{P},
\end{equation}
where
\begin{equation}
\mathbf{J}_{D}
=
\mathbb{E}_{\boldsymbol \gamma}\bigl[\mathbf J_{\boldsymbol \gamma}(\boldsymbol \gamma)\bigr],
\end{equation}
whose $(u,w)$-th entry can be obtained by applying expectation on \eqref{eq:fim_state_detailed}:
\begin{align}\label{eq:FIM_entry}
[\mathbf{J}_{D}]_{u,w}
=
&
\sum_{i=1}^{I}
\Re\bigg\{
\mathbf{f}[i]^{\mathsf{H}}
\mathbb{E}_{\boldsymbol \gamma}
\bigg[
\frac{2}{\sigma^2}
\widetilde{\mathbf{c}}_{u}^*[i]
\widetilde{\mathbf{c}}_{w}^\mathsf{T}[i]
\bigg]
\mathbf{f}[i]
\bigg\},
\end{align}
and 
\begin{equation}
\mathbf{J}_{P}
=
\mathbb{E}_{\bm{\gamma}}\!
\biggl[
-\,
\nabla_{\bm\gamma}^2
\,\ln p(\bm\gamma)
\biggr],
\end{equation}
where $\nabla_{\bm\gamma}^2\triangleq \frac{\partial^2}{\partial\bm\gamma\,\partial\bm\gamma^{T}}$ denotes the Hessian matrix operator. 
Finally, the \ac{su} position \ac{bcrb} is defined as: 
\begin{equation}
\mathrm{BCRB}
=
\mathrm{tr}\bigl([\mathbf J_{B}^{-1}]_{1:3,\,1:3}\bigr)
.
\end{equation}

\subsection{Average SINR at the PUs}
The average \ac{sinr} at the \ac{pu}s on the $i$-th subcarrier is defined as:
\begin{equation}\label{eq:avg_SINR}
\mathrm{SINR}[i]
=
\frac{
S^2[i]
}{
I^2[i]
+
\sigma^2_v
},\ 
\overline{\mathrm{SINR}}[i]
=
\frac{
S^2[i]
}{
\sigma^2_v
},
\end{equation}
where $\mathrm{SINR}[i]$ denotes the \ac{sinr} in presence of interference, and $\overline{\mathrm{SINR}}[i]$ represents the same value in absence of the interference from the \ac{sim}. 
Moreover, $S^2[i]$ and  $I^2[i]$ denote the average signal and interference power, respectively, at the \ac{pu}s on the $i$-th subcarrier, which are given by:
\begin{align}
S^2[i]
&
=
\frac{P_\text{pb}}{N_{\text{pu}}}
\sum_{r=1}^{N_\text{pu}}
\lvert 
h_{\text{pu,pb}}[r,i]
\rvert^2
\\
I^2[i]
& =
\frac{P_\text{sb}}{N_{\text{pu}}}
\sum_{r=1}^{N_\text{pu}}
\lVert
\mathbf{h}_{\text{pu,s}}[i]^\mathsf{T}
\mathbf{f}[i]
\rVert^2
=
\mathbf{f}[i]^\mathsf{H}
\mathbf{R}_\text{pu}
\mathbf{f}[i],
\end{align}
where 
\begin{equation}
\mathbf{R}_\text{pu}[i]
=
\frac{P_\text{sb}}{N_\text{pu}}
\sum_{r=1}^{N_\text{pu}}
\mathbf{h}^*_{\text{pu,s}}[r,i]
\mathbf{h}_{\text{pu,s}}[r,i]^\mathsf{T}.
\end{equation}

\vspace{-0.5cm}
\subsection{Problem Formulation}\label{sec:problem-formulation}
The design of the phase shift coefficients of the \ac{sim} layers must simultaneously:
(i) minimize the \ac{su} \ac{bcrb}, and (ii) keep the averaged \ac{se} of the \ac{pu}s above a certain threshold. 
To this end, the optimization problem can be formulated as follows:
\begin{subequations}\label{eq:opt_formulation}
\begin{align}
\min_{\{\boldsymbol{\Phi}^{(\ell)}\}_{\ell=1}^{L}}\;&
\text{BCRB}(
\{\boldsymbol{\Phi}^{(\ell)}\}_{\ell=1}^{L}
)
\\
\text{s.t.}\;
&
\mathrm{log}_2(1+\mathrm{SINR}[i])
\ge 
\kappa\,
\mathrm{log}_2(1+\overline{\mathrm{SINR}}[i])
\label{const:pu_interference}
,\\
& 
i=1,\dots ,I\, ,
\notag
\\
&\lvert[\boldsymbol{\Phi}^{(\ell)}]_{n,n}\rvert = 1,
\label{const:sim_unit}
\\
&
\forall\,n=1,\dots,N\, ,
\forall\,l=1,\dots,L\, ,
\notag
\end{align}
\end{subequations}
where \eqref{const:pu_interference} ensures that the average data rate at the \ac{pu}s is at least $\kappa$ times the average data rate when \ac{sb} is not transmitting,
and \eqref{const:sim_unit} imposes the unit‐modulus constraint on the \ac{sim} layer coefficients.  
\vspace{-0.5cm}
\section{Proposed Beamforming Method}\label{sec:beamforming}
Problem \eqref{eq:opt_formulation} is non-convex and high-dimensional and challenging due to coupled phase coefficients of the \ac{sim} layers. 
To make the problem tractable, we first assume a full freedom over the end-to-end \ac{sim} response vectors $\mathbf{f}[i]$ and drop the condition \eqref{def:f}. 
After optimizing the vectors $\mathbf{f}[i]$, we will optimize the \ac{sim} coefficients in Sec.~\ref{sec:sim_opt}. 
Thus, the relaxed problem can be reformulated as follows:
\begin{subequations}\label{eq:opt_formulation_f}
\begin{align}
\min_{\{\mathbf{f}[i]\}_{i=1}^I}\;&
\mathrm{BCRB}(\{\mathbf{f}[i]\}_{i=1}^I)
\\
&
\text{s.t.}\;
\epsilon[i]
\ge
\mathbf{f}[i]^\mathsf{H}
\mathbf{R}_{\text{pu}}[i]
\mathbf{f}[i],\ 
i=1,\dots, I
\label{const:pu_rate}
,\\
& 
\lVert
\mathbf{f}[i]
\rVert^2
\le \delta[i],
\label{const:F_pow}
\end{align}
\end{subequations}
where \eqref{const:pu_rate} is equivalent to \eqref{const:pu_interference} where $\epsilon[i]$ is defined as:
\begin{equation}\label{def:epsilon}
\epsilon[i]
=
\frac{S^2[i]-R[i]\sigma^2}{R[i]}
\end{equation}
where $R[i]=(1+\overline{\mathrm{SINR}}[i])^{\kappa}-1$, and, since $\mathbf{f}[i]$ is upper-bounded according to \eqref{def:f}, constraint \eqref{const:F_pow} enforces the per-subcarrier power limit $\delta[i]$ on $\mathbf{f}[i]$.
\begin{proposition}\label{prop:problem_equivalence}
The problem \eqref{eq:opt_formulation_f} is equivalent to:
\begin{subequations}\label{opt:equivalent_v1}
\begin{align}
\max_{\{\mathbf{d}_j\}_{j=1}^3}
& \min_{\{\mathbf{f}[i]\}_{i=1}^I} 
\sum_{j=1}^{3}
\left(
-
\mathbf{d}_{j}^\mathsf{T}
\mathbf{J}_B
\mathbf{d}_{j}
+
2\mathbf{d}_{j}^\mathsf{T}\mathbf{e}_j
\right)
\label{obj:equivalent_v1}
\\
&
\text{s.t.}\;
\eqref{const:pu_rate}, \eqref{const:F_pow},
\notag
\end{align}
\end{subequations}
where $\mathbf{d}_{j}\in\mathbb{R}^{5}$ for $j=1,2,3$ are optimization variables for the outer maximization subproblem. 
This problem can be also represented as:
\begin{subequations}\label{opt:equivalent_v2}
\begin{align}
\max_{\{\mathbf{d}_j\}_{j=1}^3}
\min_{\{\mathbf{f}[i]\}_{i=1}^I} 
& 
\bigg[
\sum_{j=1}^{3}
\left(
-
\mathbf{d}_{j}^\mathsf{T}
\mathbf{J}_P
\mathbf{d}_{j}
+
2\mathbf{d}_{j}^\mathsf{T}\mathbf{e}_j
\right)
\notag \\
& -
\sum_{i=1}^{I}
\Re\big\{
\mathbf{f}[i]^{\mathsf{H}}
\overline{\mathbf{C}}[i]
\mathbf{f}[i]
\big\}
\bigg]
\\
&
\text{s.t.}\;
\eqref{const:pu_rate}, \eqref{const:F_pow},
\notag
\end{align}
\end{subequations}
where $\mathbf{e}_j$ is the $j$-th column of the identity matrix $\mathbf{I}_5$ and:
\begin{equation}
\overline{\mathbf{C}}[i]
=
\sum_{j=1}^{3}
\mathbb{E}
\bigg[
\frac{2}{\sigma^2}
\overline{\mathbf{c}}_{\mathbf{d}_j}[i]^*
\overline{\mathbf{c}}_{\mathbf{d}_j}[i]^\mathsf{T}
\bigg]
\end{equation}
where 
\begin{equation}
\overline{\mathbf{c}}_{\mathbf{d}_j}[i]
=
\sum_{r=1}^{5}
[\mathbf{d}_j]_r
\tilde{\mathbf{c}}_{r}[i]
\end{equation}
\end{proposition}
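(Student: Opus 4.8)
The plan is to build everything on the variational characterization of a quadratic form in an inverse: for any $\mathbf{M}\succ 0$, $\mathbf{e}_j^\mathsf{T}\mathbf{M}^{-1}\mathbf{e}_j = \max_{\mathbf{d}_j\in\mathbb{R}^5}\,(2\mathbf{d}_j^\mathsf{T}\mathbf{e}_j - \mathbf{d}_j^\mathsf{T}\mathbf{M}\mathbf{d}_j)$. I would first verify this in one line: the right-hand side is a strictly concave quadratic in $\mathbf{d}_j$ whose stationary point is $\mathbf{d}_j^\star = \mathbf{M}^{-1}\mathbf{e}_j$, and back-substitution yields the value $\mathbf{e}_j^\mathsf{T}\mathbf{M}^{-1}\mathbf{e}_j$. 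Taking $\mathbf{M} = \mathbf{J}_B$, which is positive definite because the prior term $\mathbf{J}_P$ regularizes the merely positive semidefinite data term $\mathbf{J}_D$, and summing the identity over the three decoupled indices $j=1,2,3$ gives $\mathrm{tr}([\mathbf{J}_B^{-1}]_{1:3,1:3}) = \max_{\{\mathbf{d}_j\}_{j=1}^3}\sum_{j=1}^3(2\mathbf{d}_j^\mathsf{T}\mathbf{e}_j - \mathbf{d}_j^\mathsf{T}\mathbf{J}_B\mathbf{d}_j)$. This is exactly the BCRB cost of \eqref{eq:opt_formulation_f} expressed as a maximization over the auxiliary variables $\{\mathbf{d}_j\}$.

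Substituting this expression for the objective, problem \eqref{eq:opt_formulation_f} becomes the nested program $\min_{\{\mathbf{f}[i]\}}\max_{\{\mathbf{d}_j\}}\sum_{j}(2\mathbf{d}_j^\mathsf{T}\mathbf{e}_j - \mathbf{d}_j^\mathsf{T}\mathbf{J}_B\mathbf{d}_j)$ over the unchanged feasible set defined by \eqref{const:pu_rate} and \eqref{const:F_pow}; this step is an exact rewriting and needs no approximation. The content of the proposition is therefore the interchange of the $\min$ and $\max$ to reach the stated order $\max_{\{\mathbf{d}_j\}}\min_{\{\mathbf{f}[i]\}}$ in \eqref{opt:equivalent_v1}, which I would establish through a saddle-point / minimax argument. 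Passing from \eqref{opt:equivalent_v1} to \eqref{opt:equivalent_v2} is then routine bookkeeping: writing $\mathbf{J}_B = \mathbf{J}_D + \mathbf{J}_P$, the $\mathbf{f}$-independent prior term survives as $-\sum_j\mathbf{d}_j^\mathsf{T}\mathbf{J}_P\mathbf{d}_j$, while inserting \eqref{eq:FIM_entry} into $\sum_j\mathbf{d}_j^\mathsf{T}\mathbf{J}_D\mathbf{d}_j$ and exchanging the finite sums lets me fold the bilinear form $\sum_{u,w}[\mathbf{d}_j]_u[\mathbf{d}_j]_w\,\widetilde{\mathbf{c}}_u^*[i]\widetilde{\mathbf{c}}_w^\mathsf{T}[i]$ into the outer product $\overline{\mathbf{c}}_{\mathbf{d}_j}^*[i]\overline{\mathbf{c}}_{\mathbf{d}_j}^\mathsf{T}[i]$ with $\overline{\mathbf{c}}_{\mathbf{d}_j}[i] = \sum_r[\mathbf{d}_j]_r\widetilde{\mathbf{c}}_r[i]$, giving $\sum_j\mathbf{d}_j^\mathsf{T}\mathbf{J}_D\mathbf{d}_j = \sum_i\Re\{\mathbf{f}[i]^\mathsf{H}\overline{\mathbf{C}}[i]\mathbf{f}[i]\}$ with $\overline{\mathbf{C}}[i]$ as defined. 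A useful by-product is that each summand equals $|\overline{\mathbf{c}}_{\mathbf{d}_j}^\mathsf{T}[i]\mathbf{f}[i]|^2\ge 0$, so $\overline{\mathbf{C}}[i]$ induces a nonnegative quadratic form in $\mathbf{f}[i]$.

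The main obstacle is the $\min$–$\max$ interchange. The coupled objective is concave in $\{\mathbf{d}_j\}$, which is what the outer maximization wants, but the by-product above shows it is also concave in $\{\mathbf{f}[i]\}$ (it contains $-\sum_i\Re\{\mathbf{f}^\mathsf{H}\overline{\mathbf{C}}\mathbf{f}\}$); hence the inner minimization is the non-convex maximization of a convex quadratic over the convex compact set \eqref{const:pu_rate}–\eqref{const:F_pow}, and Sion's theorem cannot be applied directly in the variable $\mathbf{f}$. The route I would take is to lift to the information matrix $\mathbf{J}_D(\mathbf{f})$: the objective is affine — hence simultaneously convex and concave — in $\mathbf{J}_D$ and concave in $\{\mathbf{d}_j\}$, so on a convex compact domain Sion's minimax theorem delivers the interchange. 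The delicate point, and where the real work lies, is that the image $\{\mathbf{J}_D(\mathbf{f}):\mathbf{f}\ \text{feasible}\}$ is the image of a convex set under a quadratic map and is not convex in general; I would close this either by verifying the saddle pair $\mathbf{d}_j^\star = \mathbf{J}_B^{-1}(\mathbf{f}^\star)\mathbf{e}_j$ directly — the left saddle inequality is immediate from the variational identity, and the right one follows from the first-order optimality of the BCRB-minimizing $\mathbf{f}^\star$, since differentiating $\mathrm{tr}([\mathbf{J}_B^{-1}]_{1:3,1:3})$ reproduces exactly the stationarity conditions of $\max_{\mathbf{f}}\sum_i\Re\{\mathbf{f}^\mathsf{H}\overline{\mathbf{C}}^\star\mathbf{f}\}$ — or, failing global optimality of that inner stationary point, by reading \eqref{opt:equivalent_v1} as the tight convex-hull relaxation whose closed-form outer solution $\mathbf{d}_j^\star = \mathbf{J}_B^{-1}\mathbf{e}_j$ is precisely what the later alternating algorithm exploits.
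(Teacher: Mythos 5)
Your reduction of the BCRB objective via the variational identity $\mathbf{e}_j^\mathsf{T}\mathbf{M}^{-1}\mathbf{e}_j=\max_{\mathbf{d}_j}\,(2\mathbf{d}_j^\mathsf{T}\mathbf{e}_j-\mathbf{d}_j^\mathsf{T}\mathbf{M}\mathbf{d}_j)$, and your bookkeeping from \eqref{opt:equivalent_v1} to \eqref{opt:equivalent_v2}, are both correct and consistent with the paper (which reaches the same quadratic form through a Schur-complement epigraph and a partitioned dual variable). You have also diagnosed the real difficulty exactly: the inner problem is the maximization of a convex quadratic over the convex set cut out by \eqref{const:pu_rate}--\eqref{const:F_pow}, so Sion's theorem is unavailable in the variable $\mathbf{f}$. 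The gap is that neither of your proposed closings bridges this. Closing (a) --- verifying the saddle pair through first-order optimality of the BCRB-minimizing $\mathbf{f}^\star$ --- fails precisely because that inner problem is non-convex: stationarity of $\mathbf{f}^\star$ for $\max_{\mathbf{f}}\sum_i\Re\{\mathbf{f}[i]^\mathsf{H}\overline{\mathbf{C}}^\star[i]\mathbf{f}[i]\}$ does not imply global optimality (maximizers of a convex quadratic sit at extreme points of the feasible set, and non-optimal stationary points abound), which is exactly the contingency you flag with ``failing global optimality.'' Closing (b) asserts that \eqref{opt:equivalent_v1} is a ``tight convex-hull relaxation'' but offers no proof of tightness; that tightness is the entire remaining content of the proposition, so as written your argument establishes only the weak-duality inequality $\max\min\le\min\max$, not the claimed equivalence.

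The missing idea --- and the paper's route --- is to lift in the per-subcarrier outer products rather than in $\mathbf{J}_D$: set $\mathbf{F}[i]=\mathbf{f}[i]\mathbf{f}[i]^\mathsf{H}$ and drop the rank-one constraints. In the variables $\{\mathbf{F}[i]\}$ the objective is a convex function of $\mathbf{J}_B$, which is affine in $\mathbf{F}$, and the constraints \eqref{const:pu_rate}, \eqref{const:F_pow} become linear, so the relaxation \eqref{eq:opt_formulation_sdr} is a genuine convex SDP; its epigraph/Schur-complement form \eqref{eq:opt_formulation_sdr_epi} enjoys strong Lagrangian duality, and working out the dual (the dual blocks collapse to $\overline{\mathbf{D}}_j=\mathbf{d}_j\mathbf{d}_j^\mathsf{T}$) produces the max--min problem in matrix variables --- no minimax theorem in $\mathbf{f}$ is ever invoked. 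The descent back to vectors, i.e.\ precisely the tightness you left open, is supplied by a rank-one existence theorem for separable SDPs with few constraints per block (\cite[Prop.~3.5]{Huang2010Rank}): for fixed $\{\mathbf{d}_j\}$ the inner minimization is a separable SDP with only two linear constraints per subcarrier, hence it admits rank-one optimal solutions $\hat{\mathbf{F}}[i]=\hat{\mathbf{f}}[i]\hat{\mathbf{f}}[i]^\mathsf{H}$, making the matrix-valued inner minimum equal to the vector-valued one. Without this step (or an equivalent S-procedure / rank-reduction argument), the interchange at the heart of the proposition remains unproved.
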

\begin{proof}
See Appendix~\ref{app:FIM-derivatives}.
\end{proof}
\vspace{-0.1cm}
Hence, in order to solve the original problem \eqref{eq:opt_formulation}, we first solve the equivalent problem \eqref{opt:equivalent_v1} or \eqref{opt:equivalent_v2}, and then we optimize the \ac{sim}'s coefficients in the next section. 
To tackle the max-min problem \eqref{opt:equivalent_v1}, we resort to \ac{ao} algorithm to iteratively solve the inner min and outer max problems until convergence whose details are explained next.

\vspace{-0.3cm}
\subsection{Inner Optimization}\label{sec:inner_opt}
For any subcarrier index $i$, the optimal vector $\hat{\mathbf{f}}[i]$ can be obtained using the following proposition:

\begin{proposition}\label{prop:inner_opt}
Since for each subcarrier the inner optimization problem in \eqref{opt:equivalent_v2} is solved independently from other subcarriers, we drop the subcarrier index $i$ for notational simplicity. 
Let $\mathbf{A} = \tfrac{1}{2}(\overline{\mathbf{C}}+\overline{\mathbf C}^{\mathsf{H}})$.
Then, assuming fixed vectors $\{\mathbf{d}_j\}_{j=1}^{3}$, the optimal solution to the inner optimization can be obtained using Algorithm~\ref{alg:F_opt}.
\end{proposition}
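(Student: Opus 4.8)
The plan is to recognize the inner subproblem as a nonconvex QCQP over the intersection of two ellipsoids and to certify that the (generalized) eigenvector solution produced by the bisection in Algorithm~\ref{alg:F_opt} is globally optimal. First I would rewrite the objective: since $\Re\{\mathbf{f}^{\mathsf{H}}\overline{\mathbf{C}}\mathbf{f}\}=\mathbf{f}^{\mathsf{H}}\mathbf{A}\mathbf{f}$ for $\mathbf{A}=\tfrac12(\overline{\mathbf{C}}+\overline{\mathbf{C}}^{\mathsf{H}})$, and since each summand of $\overline{\mathbf{C}}$ is an outer product of the form $\overline{\mathbf{c}}^{*}\overline{\mathbf{c}}^{\mathsf{T}}$ obeying $\mathbf{v}^{\mathsf{H}}(\overline{\mathbf{c}}^{*}\overline{\mathbf{c}}^{\mathsf{T}})\mathbf{v}=|\overline{\mathbf{c}}^{\mathsf{T}}\mathbf{v}|^{2}\ge 0$, the matrix $\overline{\mathbf{C}}$, and hence its Hermitian part $\mathbf{A}$, is positive semidefinite. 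The inner minimization in \eqref{opt:equivalent_v2} is therefore equivalent to maximizing the convex quadratic $\mathbf{f}^{\mathsf{H}}\mathbf{A}\mathbf{f}$ subject to the two convex quadratic constraints \eqref{const:pu_rate} and \eqref{const:F_pow}. Because $\mathbf{A}\succeq 0$, the maximum is attained on the boundary, and a radial-scaling argument shows the power constraint \eqref{const:F_pow} must be active at any optimizer.

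Next I would form the Lagrangian with multipliers $\mu_1,\mu_2\ge 0$ for \eqref{const:pu_rate} and \eqref{const:F_pow}. Boundedness of the inner maximization forces the dual-feasibility condition $\mathbf{A}-\mu_1\mathbf{R}_{\text{pu}}-\mu_2\mathbf{I}\preceq 0$, while stationarity yields $(\mathbf{A}-\mu_1\mathbf{R}_{\text{pu}}-\mu_2\mathbf{I})\mathbf{f}=\mathbf{0}$, so any KKT point $\mathbf{f}$ lies in the top eigenspace of the pencil $\mathbf{A}-\mu_1\mathbf{R}_{\text{pu}}$ and is scaled to meet \eqref{const:F_pow} with equality. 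The crucial step is to certify that this KKT point is the global optimum: maximizing a convex quadratic is nonconvex, so I would invoke the strong-duality/hidden-convexity result for complex-valued QCQPs with two constraints (tightness of the semidefinite relaxation, via the complex S-procedure and the associated rank-one reduction), which guarantees zero duality gap and the recoverability of a rank-one (hence feasible) optimizer $\hat{\mathbf{f}}$ from the dual solution.

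Having fixed \eqref{const:F_pow} as active, I would collapse the two-multiplier search to a one-dimensional bisection on $\mu_1$, which is exactly the structure of Algorithm~\ref{alg:F_opt}. For fixed $\mu_1\ge 0$ the optimal direction is the unit top eigenvector $\mathbf{u}_{\max}(\mathbf{A}-\mu_1\mathbf{R}_{\text{pu}})$, scaled by $\sqrt{\delta}$. I would first test $\mu_1=0$: if $\mathbf{f}=\sqrt{\delta}\,\mathbf{u}_{\max}(\mathbf{A})$ already satisfies \eqref{const:pu_rate}, it is optimal. Otherwise I would show that the induced interference $g(\mu_1)=\mathbf{f}^{\mathsf{H}}\mathbf{R}_{\text{pu}}\mathbf{f}$ is continuous and nonincreasing in $\mu_1$ (by an eigenvalue-perturbation/Danskin argument applied to $\lambda_{\max}(\mathbf{A}-\mu_1\mathbf{R}_{\text{pu}})$), as larger $\mu_1$ steers the top eigenvector toward the low-interference subspace of $\mathbf{R}_{\text{pu}}$; monotonicity then guarantees a crossing $g(\mu_1)=\epsilon$ that the bisection locates, at which both constraints hold with the correct complementary slackness.

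The main obstacle is the global-optimality certificate in the second step, since a KKT/generalized-eigenvector point of a convex-quadratic maximization need not be globally optimal in general; this is resolved only by exploiting the special two-constraint structure and the tightness of the SDP relaxation in the complex setting. A secondary technical point is guaranteeing the monotonicity of $g(\mu_1)$ needed for bisection even when $\mathbf{R}_{\text{pu}}$ is rank-deficient or eigenvalues of the pencil cross, which I would handle through standard perturbation bounds for the largest eigenvalue together with a tie-breaking choice of eigenvector on any plateau.
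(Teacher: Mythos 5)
Your proposal is correct and its skeleton is the same as the paper's Appendix B: recast the inner problem as maximizing $\mathbf{f}^{\mathsf{H}}\mathbf{A}\mathbf{f}$ subject to \eqref{const:pu_rate} and \eqref{const:F_pow}, obtain the stationarity condition $(\mathbf{A}-\mu\mathbf{R}-\lambda\mathbf{I})\hat{\mathbf{f}}=\mathbf{0}$, accept $\sqrt{\delta}$ times the principal eigenvector of $\mathbf{A}$ when it is feasible, and otherwise bisect on $\mu$ until the principal eigenvector of $\mathbf{A}-\mu\mathbf{R}$ saturates the interference constraint. The differences lie in how you close the two holes that a bare KKT argument leaves, and both are improvements. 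First, you certify global optimality inside the proof via the complex two-constraint S-procedure/SDP-tightness result; the paper's Appendix B never does this --- it is a necessary-conditions argument --- and the tightness certificate actually lives in the proof of Prop.~\ref{prop:problem_equivalence}, where \cite[Prop.~3.5]{Huang2010Rank} guarantees rank-one optima of the relaxation for fixed $\{\mathbf{d}_j\}_{j=1}^{3}$. It is the same mathematical fact, but your placement makes Prop.~\ref{prop:inner_opt} self-contained. Second, you prove that $g(\mu)$ is nonincreasing by a Danskin argument applied to the convex function $\mu\mapsto\lambda_{\max}(\mathbf{A}-\mu\mathbf{R})$, whose derivative equals $-g(\mu)$ wherever the top eigenvalue is simple; this yields uniqueness of the crossing, whereas the paper assumes only continuity of $g$ under a simple-eigenvalue hypothesis and relies on the doubling bracket.

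One side claim of yours is imprecise: radial scaling shows that at least one constraint is active at an optimizer, not that \eqref{const:F_pow} is active at \emph{every} optimizer --- scaling up violates whichever constraint binds first, and when \eqref{const:pu_rate} binds first there exist optimizers with $\lVert\mathbf{f}\rVert^{2}<\delta$. What Algorithm~\ref{alg:F_opt} actually needs is that \emph{some} global optimizer satisfies $\lVert\mathbf{f}\rVert^{2}=\delta$; this holds in the paper's regime because $\mathbf{R}_{\text{pu}}$ has rank at most $N_{\text{pu}}\ll N$, so any optimizer can be padded by a suitably phased component in $\ker(\mathbf{R}_{\text{pu}})$ that leaves the interference unchanged and, since $\mathbf{A}\succeq\mathbf{0}$, does not decrease the objective. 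The paper's own proof makes the same implicit assumption when it normalizes $\mathbf{v}=\mathbf{f}/\sqrt{\delta}$, so this is a shared and easily repaired gap rather than a defect unique to your argument.
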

\begin{proof}
See Appendix~\ref{app:inner_opt}.
\end{proof}

\begin{algorithm}[ht] 
\caption{Proposed Method for the Inner Optimization \eqref{opt:equivalent_v2}}\label{alg:F_opt}
\begin{algorithmic}[1] 
\State 
\multiline{%
\textbf{Inputs}: 
Matrix $\mathbf{A}=\tfrac{1}{2}(\overline{\mathbf{C}}+\overline{\mathbf C}^{\mathsf{H}})$, covariance matrix $\mathbf R$, scalars $\kappa,\delta>0$, and the tolerance $\xi_\mathrm{tol}>0$. 
}
\State 
\multiline{%
Obtain $\epsilon$ based on $\kappa$ using \eqref{def:epsilon}.
}
\State \textbf{If} $\lambda_{\max}(\mathbf A)\le 0$ \textbf{then return} $\hat{\mathbf{f}}=\mathbf{0}$ \textbf{End If}
\State 
\multiline{%
Compute principal unit eigenvector $\mathbf{u}$ of $\mathbf A$. 
}
\State \textbf{If}
$\mathbf{u}^{\mathsf{H}}\mathbf{R}\mathbf{u} \le \epsilon/\delta$
\textbf{then return}
$\hat{\mathbf{f}}=\sqrt{\delta}\,\mathbf u$. 
\State 
\textbf{Else} We need to find $\mu>0$ s.t. $g(\mu)\triangleq \mathbf{v}(\mu)^{\mathsf{H}} 
\mathbf{R} 
\mathbf{v}(\mu) =\epsilon/\delta$ 
where $\mathbf{v}(\mu)$ is the principal eigenvector of 
$
\mathbf{A} - \mu\mathbf{R}
$
\State 
\textbf{Perform bisection algorithm to find $\mu$:}
\State 
\multiline{%
$\mu_{\ell}\leftarrow 0$, $\mu_{u}\leftarrow 1$ 
}
\State
\textbf{While} $g(\mu_{u})>\epsilon/\delta$ \textbf{do} 
$\mu_{u}\leftarrow 2\mu_{u}$
\textbf{end while}
\Repeat 
\State 
$\mu\leftarrow (\mu_{\ell}+\mu_{u})/2$ 
\State 
\multiline{%
compute principal unit eigenvector $\mathbf{v}(\mu)$ of $\mathbf{A}-\mu\mathbf{R}$
}
\State 
$g(\mu) \leftarrow \mathbf{v}(\mu)^{\mathsf{H}}\mathbf{R}\mathbf{v}(\mu)$ 
\State
\textbf{If} $g(\mu)>\epsilon/\delta$
\textbf{then} $\mu_{\ell}\leftarrow\mu$ \textbf{else}  $\mu_{u}\leftarrow\mu$ 
\textbf{end if} 
\Until{$\lvert g(\mu)-\epsilon/\delta\rvert<\xi_\mathrm{tol}$} 
\State 
\textbf{Output}:
$\hat{\mathbf{f}}=\sqrt{\delta}\,\mathbf{v}(\mu)$. 
\end{algorithmic} 
\end{algorithm}

\subsection{Outer Optimization}\label{sec:outer_opt}
Assuming fixed vectors $\{\mathbf{f}[i]\}_{i=1}^{I}$, the outer maximization problem in \eqref{opt:equivalent_v1} can be represented as:
\begin{align}\label{eq:outer_problem_rev}
\max_{\{\mathbf d_j\}_{j=1}^3}\;
\sum_{j=1}^{3}\Big(
-
\mathbf d_j^\mathsf{T}\mathbf J_B\mathbf d_j
+
2\,\mathbf{e}_j^\mathsf{T}\mathbf{d}_j
\Big).
\end{align}

\begin{proposition}\label{prop:outer}
The problem \eqref{eq:outer_problem_rev} can be solved independently for each $j=1,2,3$, and each $\mathbf{d}_j\in\mathbb R^{5}$ is uniquely obtained as:
\begin{equation}\label{eq:d_star_5}
\hat{\mathbf{d}}_j
= 
\mathbf J_B^{-1}\mathbf e_j,\qquad j=1,2,3.\;
\end{equation}
\end{proposition}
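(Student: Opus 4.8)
The plan is to exploit the separable, unconstrained structure of the outer objective. The summand indexed by $j$ in \eqref{eq:outer_problem_rev} depends only on the vector $\mathbf{d}_j$, and the three vectors $\mathbf{d}_1,\mathbf{d}_2,\mathbf{d}_3$ share no coupling in either the quadratic form $\mathbf{d}_j^\mathsf{T}\mathbf{J}_B\mathbf{d}_j$ or the linear term $2\,\mathbf{e}_j^\mathsf{T}\mathbf{d}_j$. Hence the joint maximization over $\{\mathbf{d}_j\}_{j=1}^3$ factors into three independent problems, and I would treat each $f_j(\mathbf{d}_j)=-\mathbf{d}_j^\mathsf{T}\mathbf{J}_B\mathbf{d}_j+2\,\mathbf{e}_j^\mathsf{T}\mathbf{d}_j$ as a free quadratic in $\mathbf{d}_j\in\mathbb{R}^5$.

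Next I would establish strict concavity so that a stationary point is the unique global maximizer. From \eqref{def:J_B} the matrix $\mathbf{J}_B=\mathbf{J}_D+\mathbf{J}_P$ is the sum of the expected Fisher information $\mathbf{J}_D$, which is symmetric positive semidefinite, and the prior precision $\mathbf{J}_P$, which is positive definite for a proper prior $p(\boldsymbol{\gamma})$; therefore $\mathbf{J}_B\succ 0$ and is invertible. The Hessian of $f_j$ is $\nabla^2_{\mathbf{d}_j}f_j=-2\mathbf{J}_B\prec 0$, so each $f_j$ is strictly concave and admits exactly one maximizer.

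Finally I would invoke the first-order optimality condition. Setting $\nabla_{\mathbf{d}_j}f_j=-2\mathbf{J}_B\mathbf{d}_j+2\mathbf{e}_j=\mathbf{0}$ gives the normal equation $\mathbf{J}_B\mathbf{d}_j=\mathbf{e}_j$, whose unique solution is $\hat{\mathbf{d}}_j=\mathbf{J}_B^{-1}\mathbf{e}_j$; strict concavity certifies that this stationary point is the unique global maximizer, which is exactly \eqref{eq:d_star_5}. The calculation is otherwise routine, and the only step deserving care is the justification that $\mathbf{J}_B\succ 0$ (and hence invertible), which I would anchor in the assumed existence of the prior so that $\mathbf{J}_P$ contributes a strictly positive-definite term; this simultaneously guarantees that the outer problem is well posed and that the closed-form expression $\mathbf{J}_B^{-1}\mathbf{e}_j$ is meaningful.
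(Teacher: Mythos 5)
Your proof is correct and follows essentially the same route as the paper: decouple the objective into three independent quadratics, set the gradient to zero to get $\mathbf{J}_B\mathbf{d}_j=\mathbf{e}_j$, and invoke positive definiteness of $\mathbf{J}_B$ for uniqueness. Your additional step of justifying $\mathbf{J}_B\succ 0$ via $\mathbf{J}_B=\mathbf{J}_D+\mathbf{J}_P$ with a proper prior is a welcome refinement of a fact the paper simply asserts, but it does not change the argument's structure.
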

\begin{proof}
See Appendix~\ref{app:outer_opt}.
\end{proof}

\subsection{Alternating optimization}
In the previous subsections, we solved the inner and outer optimization subproblems in the max-min problem \eqref{opt:equivalent_v1} independently. 
In this subsection we propose an \ac{ao} based algorithm to solve the entire problem.
To this end, we iteratively optimize the vectors $\{\mathbf{f}[i]\}_{i=1}^{I}$ and $\{\mathbf{d}_j\}_{j=1}^{3}$ using Algorithm~\ref{alg:F_opt} and \eqref{eq:d_star_5}, respectively until convergence. 
Details are provided in Algorithm~\ref{alg:alt_opt}.
Note that in Prop.~\ref{prop:inner_opt} we solved the problem using fixed vectors $\{\mathbf{d}_j\}_{j=1}^{3}$. 
Thus, it is essential to obtain a robust initialization for these vectors. 
First, we provide the following lemma:
\begin{lemma}\label{lemma:trace}
A lower bound for \ac{bcrb} can be obtained as:
\begin{equation}
\mathrm{BCRB}\ge \frac{9}{\mathrm{tr}([\mathbf{J}_B]_{1:3,1:3})}.
\end{equation}
\end{lemma}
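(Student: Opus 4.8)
The plan is to bound the BCRB in two stages. First I would relate the trace of the positional block of $\mathbf{J}_B^{-1}$ to the trace of the \emph{inverse} of the corresponding block of $\mathbf{J}_B$ through a Schur-complement argument, and then invoke the Cauchy--Schwarz (equivalently AM--HM) inequality on the eigenvalues of that $3\times 3$ block. The $9$ in the claim is simply the square of the dimension of the positional sub-vector $\mathbf{p}_{\text{su}}$ (the first three entries of $\bm{\gamma}$), which is what motivates the two-step structure.

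I would start by partitioning $\mathbf{J}_B$ conformally with the split of $\bm{\gamma}$ into the three positional coordinates and the two gain parameters, writing $\mathbf{J}_B=\bigl[\begin{smallmatrix}\mathbf{A} & \mathbf{B}\\ \mathbf{B}^{\mathsf{T}} & \mathbf{D}\end{smallmatrix}\bigr]$ with $\mathbf{A}=[\mathbf{J}_B]_{1:3,1:3}$. Because $\mathbf{J}_B=\mathbf{J}_D+\mathbf{J}_P$ is positive definite (the prior term $\mathbf{J}_P$ guarantees invertibility), the block $\mathbf{D}$ is invertible and the standard block-inverse identity gives $[\mathbf{J}_B^{-1}]_{1:3,1:3}=(\mathbf{A}-\mathbf{B}\mathbf{D}^{-1}\mathbf{B}^{\mathsf{T}})^{-1}$. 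Since $\mathbf{B}\mathbf{D}^{-1}\mathbf{B}^{\mathsf{T}}\succeq\mathbf{0}$, the Schur complement obeys $\mathbf{A}-\mathbf{B}\mathbf{D}^{-1}\mathbf{B}^{\mathsf{T}}\preceq\mathbf{A}$, and inverting this positive-semidefinite ordering yields $[\mathbf{J}_B^{-1}]_{1:3,1:3}\succeq\mathbf{A}^{-1}$. Taking traces (which preserves the PSD ordering) then gives $\mathrm{BCRB}=\mathrm{tr}([\mathbf{J}_B^{-1}]_{1:3,1:3})\ge\mathrm{tr}(\mathbf{A}^{-1})$.

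Next I would lower-bound $\mathrm{tr}(\mathbf{A}^{-1})$ using only the eigenvalues $\lambda_1,\lambda_2,\lambda_3>0$ of the $3\times 3$ positive-definite matrix $\mathbf{A}$. By Cauchy--Schwarz, $\bigl(\sum_k\lambda_k\bigr)\bigl(\sum_k\lambda_k^{-1}\bigr)\ge 9$, i.e. $\mathrm{tr}(\mathbf{A})\,\mathrm{tr}(\mathbf{A}^{-1})\ge 9$, so $\mathrm{tr}(\mathbf{A}^{-1})\ge 9/\mathrm{tr}(\mathbf{A})$. Chaining this with the previous inequality and recalling $\mathbf{A}=[\mathbf{J}_B]_{1:3,1:3}$ completes the proof. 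The argument is essentially routine; the only point requiring care is verifying that every block invoked is positive definite so that the Schur-complement inverse and the matrix orderings are well defined, which follows from $\mathbf{J}_B\succ\mathbf{0}$. I do not anticipate a substantive obstacle beyond this bookkeeping.
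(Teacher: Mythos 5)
Your proposal is correct, and it follows the same overall architecture as the paper's proof: partition $\mathbf{J}_B$ conformally into the $3\times 3$ positional block and the $2\times 2$ gain block, use the block-inverse identity $[\mathbf{J}_B^{-1}]_{1:3,1:3}=\mathbf{S}^{-1}$ with $\mathbf{S}$ the Schur complement, and finish with the Cauchy--Schwarz (AM--HM) bound $\mathrm{tr}(\mathbf{M})\,\mathrm{tr}(\mathbf{M}^{-1})\ge 9$ for a $3\times 3$ positive definite matrix. The one place where you diverge is the middle step, and it is worth noting the trade-off. You pass from $\mathbf{S}\preceq\mathbf{A}$ to $\mathbf{S}^{-1}\succeq\mathbf{A}^{-1}$, i.e.\ you invoke the anti-monotonicity of matrix inversion in the L\"owner order, then apply Cauchy--Schwarz to $\mathbf{A}=[\mathbf{J}_B]_{1:3,1:3}$; the paper instead applies Cauchy--Schwarz directly to $\mathbf{S}$, obtaining $\mathrm{tr}(\mathbf{S}^{-1})\ge 9/\mathrm{tr}(\mathbf{S})$, and then only needs the elementary fact that $\mathbf{S}\preceq\mathbf{A}$ implies $\mathrm{tr}(\mathbf{S})\le\mathrm{tr}(\mathbf{A})$ (trace monotonicity, which follows from nonnegativity of the trace of a PSD matrix). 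Both chains are valid and give the same bound, but the paper's ordering deliberately avoids operator anti-monotonicity of the inverse, which is a standard yet genuinely deeper result that itself requires proof; if you keep your version, you should either cite that fact or prove it (e.g.\ by congruence with $\mathbf{S}^{-1/2}$), whereas the paper's arrangement needs no such auxiliary lemma. Your intermediate inequality $\mathrm{BCRB}\ge\mathrm{tr}(\mathbf{A}^{-1})$ is also slightly stronger than what the final statement requires, so nothing is lost by reordering.
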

\begin{proof}
See Appendix~\ref{app:lemma_trace}.
\end{proof}

From Lemma~\ref{lemma:trace}, it can been seen that $9/\mathrm{tr}([\mathbf{J}_B]_{1:3,1:3})$ is a lower bound for the \ac{bcrb}. 
Hence, maximizing $\mathrm{tr}([\mathbf{J}_B]_{1:3,1:3})$ would minimize this lower bound. Note that:
\begin{equation}\label{eq:trace_JB}
\mathrm{tr}([\mathbf{J}_B]_{1:3,1:3})
=
\sum_{j=1}^{3}
\mathbf{e}_j^\mathsf{T}
\mathbf{J}_B
\mathbf{e}_j.
\end{equation}
Comparing \eqref{eq:trace_JB} with \eqref{obj:equivalent_v1}, we observe that \eqref{eq:trace_JB} is the objective value of the inner optimization problem using the vectors $\mathbf{d}_j=\mathbf{e}_j$ for $j\in\{1,2,3\}$. 
Hence, the vector $\mathbf{e}_j$ is a suitable choice as initialization for the vector $\mathbf{d}_j$. 
\begin{algorithm}[ht]
\caption{Alternating optimization to solve \eqref{opt:equivalent_v1}}\label{alg:alt_opt}
\begin{algorithmic}[1]
\State \textbf{Inputs:} 
tolerances $\varepsilon_{\text{tol}},\tau_{\text{tol}}$.
\State 
\textbf{Initialization:}
$\mathbf{d}_j\gets \mathbf{e}_j$ for $j\in\{1,2,3\}$, 
$\mathbf{D}^{(0)}=[\mathbf{d}_1,\mathbf{d}_2,\mathbf{d}_3]$, $g^{(0)}\gets \infty$.
\While{$\big|g^{(k)}-g^{(k-1)}\big|/|g^{(k-1)}| > \varepsilon_{\text{tol}}$ \textbf{and} $\|\mathbf{D}^{(k)}-\mathbf{D}^{(k-1)}\|_F>\tau_{\text{tol}}$}
\State 
\multiline{%
Update $\{\mathbf{f}[i]\}_{i=1}^{I}$ via Algorithm~\ref{alg:F_opt}.

$g^{(k)}\gets $ objective value \eqref{obj:equivalent_v1}.
}
\State 
\multiline{%
Update $\{\mathbf{d}_j\}_{j=1}^{3}$ based on closed form solution \eqref{eq:d_star_5}.

$\mathbf{D}^{(k)}\gets [d_1,d_2,d_3]$.
}
\EndWhile
\State
\textbf{Outputs:} $\{\hat{\mathbf{f}}[i]\}_{i=1}^{I}$, $\{\hat{\mathbf{d}_j\}}_{j=1}^{3}$
\end{algorithmic}
\end{algorithm}

\section{Optimization of the SIM Coefficients}\label{sec:sim_opt}
In Sec.~\ref{sec:beamforming}, the vectors $\mathbf{f}[i]$ are optimized for each subcarrier index $i\in\{1,\dots ,I\}$ without considering the key equation \eqref{def:f} which establishes the connection between $\mathbf{f}[i]$ and the \ac{sim}'s propagation matrix $\mathbf{G}[i]$ (defined in \eqref{eq:sim_end_to_end}). 
In the rest of this section, we denote these optimized vectors as $\hat{\mathbf{f}}[i]$ (i.e., the outputs of Algorithm~\ref{alg:alt_opt}). 
Observing the signal models in \eqref{eq:SUE-received-sig} and \eqref{def:los_h_su_s}, to optimize the \ac{sim}'s phase coefficients, we try to choose the phase shifts to achieve the most similar beampatterns obtained by $\{\hat{\mathbf{f}}[i]\}_{i=1}^{I}$. 
Thus, the problem can be formulated as follows:
\begin{equation}\label{opt:sim_phase}
\min_{\{\boldsymbol{\Phi}^{(\ell)}\}_{l=1}^{L}\ 
}\ 
\frac{1}{I}
\sum_{i=1}^{I}
\lVert
\mathbf{b}[i]
-
\mathbf{q}[i]
\rVert^2,\ 
\mathrm{s.t.}\ 
\mathbf{b}[i]
=
\mathbf{A}[i]^{\mathsf{T}}
\mathbf{G}[i]
\mathbf{w}[i],
\end{equation}
where 
$\mathbf{q}[i]=\mathbf{A}[i]^{\mathsf{T}}\hat{\mathbf{f}}[i]$ 
denotes the beampattern of the vector $\hat{\mathbf{f}}[i]$. 
$\mathbf{A}[i]=[\mathbf{a}_i(\boldsymbol{\theta}_{1}),\dots ,\mathbf{a}_i(\boldsymbol{\theta}_{N_g})]$, is the collection of \ac{arv} at $N_g$ 2D-\ac{aod} grids in the $i$-th subcarrier. 
To solve \eqref{opt:sim_phase}, and motivated by the structural similarity between \eqref{eq:sim_end_to_end} and \acp{dnn}, we adopt a mini-batch optimization approach. 
In each batch we evaluate the beampattern-matching loss \eqref{opt:sim_phase} on a randomly drawn angular grid of size $N_g$ and update the \ac{sim} phase coefficients via backpropagation and a gradient-based optimizer. 
Implementation details are given in the following subsections.

\subsection{Forward and backward recursions}
To evaluate $\mathbf{f}[i]$ and to compute gradients efficiently we use right (forward) and left (backward) recursions.

Define right vectors $\mathbf{r}^{(\ell)}[i]\in\mathbb{C}^N$ for layer $\ell$ and subcarrier $i$ by
$\mathbf{r}^{(1)}[i]
= 
\mathbf{w}[i],$
and for $\ell=2,\dots,L$,
\begin{equation}
\mathbf{r}^{(\ell)}[i] = \mathbf{W}^{(\ell)}[i] \big( \boldsymbol{\Phi}^{(\ell-1)} \mathbf{r}^{(\ell-1)}[i] \big).
\end{equation}
Thus the vector entering layer $\ell$ (before multiplication by the diagonal phase $\boldsymbol{\Phi}^{(\ell)}$) is $\mathbf{r}^{(\ell)}[i]$.
In a similar manner, we define left matrices $\mathbf{L}^{(\ell)}[i]\in\mathbb{C}^{N\times N}$ by
$\mathbf{L}^{(L)}[i] = \mathbf{I}_N,$
and for $\ell=L-1,\dots,1$,
\begin{equation}
\mathbf{L}^{(\ell)}[i] 
= 
\mathbf{L}^{(\ell+1)}[i] 
\big( 
\boldsymbol{\Phi}^{(\ell+1)} \mathbf{W}^{(\ell+1)}[i] \big).
\end{equation}

With these definitions, for $\ell\in\{1\,\dots, L\}$, the \ac{sim} output response vector $\mathbf{f}[i]$ in \eqref{def:f} can be represented as:
\begin{equation}
\mathbf{f}[i] = \mathbf{L}^{(\ell)}[i]\big( \boldsymbol{\Phi}^{(\ell)} 
\mathbf{r}^{(\ell)}[i] \big).
\label{eq:f_via_LR}
\end{equation}

\subsection{Per-element gradient derivation}
For a single subcarrier $i$ and a given angular batch (we omit the batch index for clarity), define the residual vector
\begin{equation}
\mathbf{r}[i]
= 
\mathbf{b}[i] - \mathbf{q}[i].
\end{equation}
The per-subcarrier contribution to the loss is $J[i] 
= 
\mathbf{r}[i]^{\mathsf{H}} \mathbf{r}[i]$. 
By chain rule,
\begin{equation}
\frac{\partial J[i]}{\partial \phi_n^{(\ell)}}
= 
2\,\Re\!\Big\{ \Big( \frac{\partial \mathbf{b}[i]}{\partial \phi_n^{(\ell)}} \Big)^{\mathsf{H}} 
\mathbf{r}[i] 
\Big\}.
\end{equation}
From \eqref{eq:f_via_LR} only the $n$-th diagonal element of $\boldsymbol{\Phi}^{(\ell)}$ affects $\phi_n^{(\ell)}$, hence:
\begin{equation}\label{eq:partial_b_phi}
\frac{\partial \mathbf{b}[i]}{\partial \phi_n^{(\ell)}}
= j\,\mathbf{A}[i]^{\mathsf{T}}\,\mathbf{L}^{(\ell)}[i]\,\mathbf{e}_n \,e^{j\phi_n^{(\ell)}} r_{n}^{(\ell)}[i],
\end{equation}
where $\mathbf{e}_n$ is the $n$-th column of $\mathbf{I}_N$ and $r_{n}^{(\ell)}[i]$ is the $n$-th entry of $\mathbf{r}^{(\ell)}[i]$.
The backprojected vector $\mathbf{u}^{(\ell)}[i]\in\mathbb{C}^{N}$ for layer $\ell$ is defined as:
\begin{equation}
\mathbf{u}^{(\ell)}[i] 
= 
\mathbf{L}^{(\ell)}[i]^{\mathsf{H}} 
\mathbf{A}[i]^{*}
\,\mathbf{r}[i] 
\in\mathbb{C}^N,
\end{equation}
and denote its $n$-th entry by $u_{n}^{(\ell)}[i]$. 
Substituting this into \eqref{eq:partial_b_phi} and simplifying:
\begin{align}
\frac{\partial J[i]}{\partial \phi_n^{(\ell)}}
&= 
2\Re\!\Big\{ -j\,e^{-j\phi_n^{(\ell)}} r_{i,n}^{(\ell)*}\, u_{i,n}^{(\ell)} \Big\} \nonumber\\
&= 2\,\Im\!\big\{ e^{-j\phi_n^{(\ell)}}\, r_{i,n}^{(\ell)*}\, u_{i,n}^{(\ell)} \big\}.
\label{eq:grad_element}
\end{align}
By stacking the gradients across $n=1,\dots,N$, the gradient vector $\mathbf{g}^{(\ell)}[i]\in \mathbb{R}^N$ for layer $\ell$ can be represented as:
\begin{equation}
\mathbf{g}^{(\ell)}[i]
=
\nabla_{\bm{\Phi}^{(\ell)}} J[i]
= 2\,\Im\!\big\{ \boldsymbol{\Phi}^{(\ell)^*} 
(
\mathbf{r}^{(\ell)^*}[i] 
\odot 
\mathbf{u}^{(\ell)}[i] 
)
\big\}.
\label{eq:grad_layer}
\end{equation}

\subsection{Mini-batch gradient and update rule}
When using a mini-batch of $N_g$ angular samples, the average batch loss over subcarriers can be defined as:
\begin{equation}
\mathcal{L}_{\text{batch}} = \frac{1}{I}\sum_{i=1}^{I} 
J[i],
\end{equation}
whose gradient can be obtained as:
\begin{equation}
\nabla_{\bm{\Phi}^{(\ell)}} \mathcal{L}_{\text{batch}}
= \frac{1}{I}\sum_{i=1}^{I} \nabla_{\bm{\Phi}^{(\ell)}} 
J[i].
\end{equation}

We use the Adam optimizer to update the \ac{sim} phase coefficients whose details are provided in Algorithm~\ref{alg:phase_opt}.

\begin{algorithm}[ht]
\caption{SIM Coefficients Optimization}\label{alg:phase_opt}
\begin{algorithmic}[1]
\State 
\textbf{Inputs:} 
Initial \ac{sim} phase coefficients $\{\bm{\Phi}^{(\ell)}\}$, 
transmission matrices $\{\mathbf{W}^{(\ell)}[i],\mathbf{w}[i]\}$, hyperparameters: learning rate $\eta$, Adam parameters $(\beta_1,\beta_2,\varepsilon)$, batch size $N_g$, number of batches per epoch $N_b$, number of epochs $N_e$.
\State 
$t\!\leftarrow\!0$, $\mathbf m^{(\ell)}\!\leftarrow\!0$, $\mathbf v^{(\ell)}\!\leftarrow\!0$ for $\ell=1\!:\!L$.
\For{each epoch}
  \For{each mini-batch}
    \State 
    \multiline{%
    Sample $N_g$ angular grid points for this batch and form $\mathbf{A}[i]$ on the batch and obtain $\mathbf q[i]=\mathbf{A}[i]^{\mathsf{T}}\hat{\mathbf{f}}[i]$.
    }
    \For{each subcarrier $i=1,\dots,I$}
      \State Compute right recursion $\mathbf{r}^{(\ell)}[i]$ for $\ell=1: L$
      \State Compute left matrices $\mathbf{L}^{(\ell)}[i]$ for $\ell=L: 1$
      \State Compute $\mathbf{f}[i]$ and $\mathbf{b}[i] = \mathbf{A}[i]^{\mathsf{T}}\mathbf{f}[i]$.
      \State Compute residual $\mathbf{r}[i] = \mathbf{b}[i] - \mathbf{q}[i]$.
      \State 
      \multiline{%
      Compute $\mathbf{u}^{(\ell)}[i] = \mathbf{L}^{(\ell)^\mathsf{H}}[i]\mathbf{A}[i]^{*} \mathbf{r}[i]$.
      }
      \State 
      \multiline{%
      Obtain the gradient vectors $\mathbf{g}^{(\ell)}[i]$ using \eqref{eq:grad_layer} for $\ell=1: L$.
      }
    \EndFor
    \State 
    \multiline{%
    Compute the average batch gradient $\bar{\mathbf{g}}^{(\ell)}=\frac{1}{IN_g}\sum_{i=1}^{I}\mathbf{g}^{(\ell)}[i]$.
    }
    \State 
    {\bf Adam moment updates:}
    \begin{align*}
    &t\gets t+1,\\
    &\mathbf m^{(\ell)} \leftarrow \beta_1 \mathbf m^{(\ell)} + (1-\beta_1)\,\bar{\mathbf{g}}^{(\ell)},\\
    & \mathbf v^{(\ell)} \leftarrow \beta_2 \mathbf v^{(\ell)} + (1-\beta_2)\,\big(\bar{\mathbf{g}}^{(\ell)}\odot\bar{\mathbf{g}}^{(\ell)}\big),\\
    &\widehat{\mathbf m}^{(\ell)}=\frac{\mathbf m^{(\ell)}}{1-\beta_1^{t}},\qquad
      \widehat{\mathbf v}^{(\ell)}=\frac{\mathbf v^{(\ell)}}{1-\beta_2^{t}}.
    \end{align*}
    \State 
    \multiline{%
    {\bf Phase coefficients update:}
    
    $
      \bm{\Phi}^{(\ell)} \leftarrow \operatorname{wrap}\!\Big(\bm{\Phi}^{(\ell)} - \eta\;\frac{\widehat{\mathbf m}^{(\ell)}}{\sqrt{\widehat{\mathbf v}^{(\ell)}}+\varepsilon}\Big),
    $
    where $\operatorname{wrap}(\cdot)$ projects angles into $(-\pi,\pi]$.
    }
  \EndFor
\EndFor
\end{algorithmic}
\end{algorithm}

\section{Complexity Analysis}
\subsection{Complexity of Algorithms~\ref{alg:F_opt} and \ref{alg:alt_opt}}
The dominant computational burden in Algorithm~\ref{alg:F_opt} is the repeated computation of the principal eigenvector of the shifted matrix 
$\mathbf{A}-\mu\mathbf{R}$ during the bisection/bracketing phase. 
Let $M_1$ denote the number of bisection iterations needed to reach the tolerance $\xi_{\mathrm{tol}}$ and by $T_{\mathrm{eig}}(N)$ the cost of a single principal-eigenpair computation, the per-subcarrier complexity is $O\left(M_1\,T_{\mathrm{eig}}(N)\right)$. 
Thus, for $I$ subcarriers the overall complexity scales as $O\left(I\,M_1\,T_{\mathrm{eig}}(N)\right)$. 
Note that a full eigenvalue decomposition incurs complexity $T_{\mathrm{eig}}=O(N^3)$. 
However, since only the principal eigenvector is required, much more efficient iterative methods (e.g., Lanczos) can be employed using a few matrix-vector products per iteration (roughly $O(N^2)$ per iteration for dense matrices) with fast convergence \cite[Chap.~10]{golub2013matrix}.
Regarding the outer optimization, since $\mathbf{J}_B\in \mathbb{R}^{5\times 5}$, the closed form solution \eqref{eq:d_star_5} of $\{\mathbf{d}_j\}_{j=1}^{3}$ has the complexity $O(1)$. 
Hence, the complexity of Algorithm~\ref{alg:alt_opt}, can be obtained as $O\left(I\,M_2\,M_1\,T_{\mathrm{eig}}(N)\right)$
where $M_2$ is the average number of the alternating iterations until convergence.
In Sec.~\ref{sec:simul_converge}, Monte-Carlo simulations are conducted to demonstrate the convergence behavior of these algorithms.
\vspace{-0.5cm}
\subsection{Complexity of Algorithm~\ref{alg:phase_opt}}
The dominant cost inside each mini-batch arises from constructing the backward left matrices and performing dense matrix-matrix and matrix-vector products for each subcarrier. 
Conservatively, the per-mini-batch complexity scales as $O\big(I(LN^3 + N_g N + L N^2)\big)$, so that the total training complexity is
$
O\!\big(N_e N_b I (L N^3 + N_g N + L N^2)\big)
$, 
and in the large-$N$ regime is dominated by $O(N_e N_b I L N^3)$. 
It is noteworthy that, despite the obtained asymptotic arithmetic cost, parallelizing across subcarriers and batches and using optimized GPU kernels, substantially reduces this cost. 
\begin{remark}
While we assume quasi-static channels in this paper, Algorithm~\ref{alg:phase_opt} can be adapted to real-time scenarios by initializing with the previous optimized phase coefficients and performing limited online fine-tuning (small batches and learning rates), or by using precomputed phase-codebooks\cite{An2024Codebook, fadakar2025mutual, fadakar2025hybrid}. 
The details are left for future work.
\end{remark}

\vspace{-0.5cm}
\section{Simulations}
In this section, we evaluate the performance of the proposed methods for \ac{sim}-assisted \ac{mmwave} systems through extensive numerical simulations.
\vspace{-0.5cm}
\subsection{Simulation Setup}\label{sec:simulation_setup}
The \ac{sim} layers are assumed to be equipped with a \ac{upa} with the same number of rows and columns $N^h=N^v$. 
For the \ac{su}'s state parameter vector's prior $p_{\boldsymbol{\gamma}}$, we assume a uniform distribution inside the cuboid specified by $50\le x\le 70$, $-10\le y\le 10$, and $0\le z\le 5$.
The number of $M_p=20000$ points are generated based on the prior $p_{\boldsymbol{\gamma}}$ to estimate the expectation $\mathbb{E}_{\boldsymbol{\gamma}}$ in \eqref{eq:FIM_entry}. 
The positions of the components of the considered system as well as the uncertainty region for the \ac{su} are shown in Fig.~\ref{fig:Env_map}, 
the default system and algorithmic parameters are presented in Table~\ref{tab:sys-params}, and Table~\ref{tab:train-params}, respectively. 
In all simulations except Sec.~\ref{sec:robustness_Npu}, we use $N_\text{pu}=2$ \acp{pu} (i.e., \ac{pu} 1 and \ac{pu} 2 depicted in Fig.~\ref{fig:Env_map}). 
For the \ac{sim}'s geometry parameters including the area of each meta-atom and inter-layer distance, trial and error and also inspiration from the existing literature (e.g., \cite{An2024SIM_DOA, li2025stacked_ofdm}) were used to obtain suitable parameters which are shown in Table~\ref{tab:sys-params}.
Moreover, the number of $Q_{\text{su,s}}=Q_{\text{pu,s}}=Q_{\text{su,pb}}=Q_{\text{pu,pb}}=50$ single-point scatterers are randomly generated in the environment to model \acp{mpc} for the geometric channels in \eqref{def:channels}. 

It is noteworthy that according to \eqref{eq:Sommerfeld_formula}, due to different amplitudes in transmission matrices $\mathbf{W}^{(\ell)}[i]$ between layers, the gain of the end-to-end \ac{sim} response in \eqref{eq:sim_end_to_end} or \eqref{def:f}, may vary for different numbers of \ac{sim} meta-atoms $N$ and the number of layers $L$. 
Thus, for a fair comparison in different simulations, we define the total transmitted power from the \ac{sws} i.e., the total radiated power from the \ac{sim}'s output layer as:
\begin{equation}\label{def:p_sws}
P_{\text{sws}}
=
\frac{IP_\text{sb}}{\sum_{i=1}^{I}
\lVert\mathbf{f}[i]
\rVert^2},
\end{equation}
which is set to $P_{\text{sws}}=30\,\mathrm{dBm}$ as a default value, and the $\ac{pb}$ power is fixed at $P_\text{pb}=30\,\mathrm{dBm}$ by default. 
It is noteworthy that by interpreting the \ac{sb} omni-antenna and \ac{sim} as an adaptive antenna, $P_{\text{sws}}$ corresponds to the the product of conducted power and antenna efficiency. 
Under these power levels, we choose the \ac{qos} parameter $\kappa=98\%$ in \eqref{const:pu_interference}.
Note that some parameters in this subsection may differ across different simulation scenarios.

\begin{figure}[!t]
\centering
\includegraphics[width=\columnwidth]{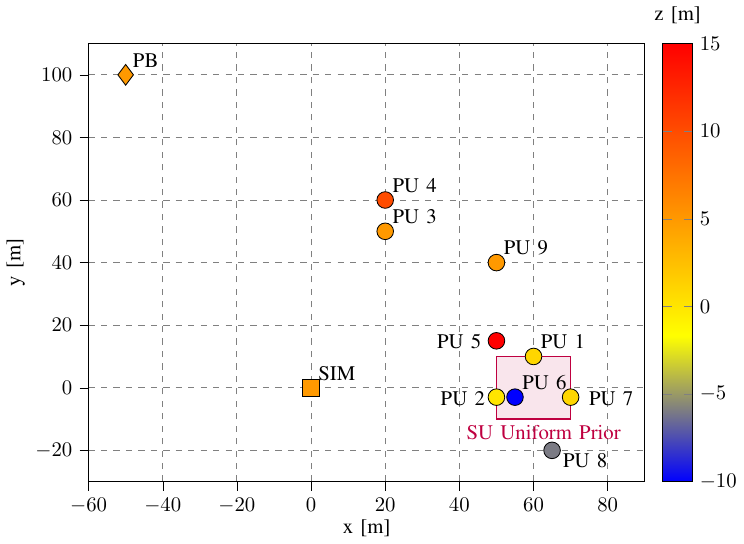}
\caption{
Positions of the \ac{sim}, \ac{pb}, \acp{pu}, and the \ac{su} uncertainty region considered in the simulations based on the selected uniform prior. 
The colorbar shows the corresponding $z$ value coordinates.
}
\label{fig:Env_map}
\end{figure}

\begin{table}[ht]
\caption{\label{tab:sys-params} System parameters.}
\centering
\fontsize{12}{10}\selectfont 
\resizebox{\columnwidth}{!}{
\begin{tabular}{ |l|l|  }
\hline
Default System Parameters and
Symbol
&
\textbf{Value}
\\
\hline
Light speed $c$ & $3\times 10^8\,\mathrm{m/s}$
\\
Carrier frequency $f_c$ and wavelength $\lambda_c=c/f_c$ & $30\,\mathrm{GHz},\ 1\,\mathrm{cm}$
\\
Noise PSD $N_0$ & $-173.855\,\mathrm{dBm}$
\\
Underlay bandwidth $B$ & $50\,\mathrm{MHz}$
\\
Underlay subcarrier spacing $\Delta f$ & $1\,\mathrm{MHz}$
\\
\ac{sws} transmit power $P_\text{sws}$ & $30\,\mathrm{dBm}$
\\
\ac{pb} transmit power $P_\text{pb}$ & $30\,\mathrm{dBm}$
\\
\ac{qos} parameter $\kappa$ in \eqref{const:pu_interference} & $98\%$
\\
Frequency of the $i$-th subcarrier
$f_i$ &  $f_c-(i-1)\Delta f$
\\
\ac{sb} position $\mathbf{p}_\text{sb}$ & $[0,0,5]^\mathsf{T}$
\\
\ac{pb} position $\mathbf{p}_\text{pb}$ & $[-50,100,5]^\mathsf{T}$
\\
Number of \ac{sim} layers 
$L$ &  $4$
\\
Inter-layer distance in the \ac{sim} 
$d_s$ &  $3\lambda_c/2$
\\
Distance between two adjacent meta-atoms
$d$ &  $\lambda_c/2$
\\
Area of each meta-atom
$A_s$ &  $\lambda_c/2$
\\
\ac{sim} per-layer number of rows and columns $N^{h}$, $N^v$ & $6$, $6$
\\
Number of generated points $M_p$ to estimate $\mathbb{E}_{\boldsymbol{\gamma}}$ & $20000$
\\
Upper bounds $\delta[i]$ in \eqref{const:F_pow} & $1$
\\
\hline
\end{tabular}
}
\end{table}

\begin{table}[ht]
\caption{\label{tab:train-params} Algorithmic parameters.}
\centering
\fontsize{12}{10}\selectfont 
\resizebox{\columnwidth}{!}{
\begin{tabular}{ |l|l|  }
\hline
Parameter
&
\textbf{Value}
\\
\hline
Parameter $\xi_\mathrm{tol}$ in Algorithm~\ref{alg:F_opt} & $10^{-20}$
\\
Parameters $\varepsilon_{\text{tol}},\tau_{\text{tol}}$ in Algorithm~\ref{alg:alt_opt} 
& $10^{-12}$, $10^{-12}$
\\
Number of epochs $N_e$ & $200$
\\
Number of batches per epoch & $50$
\\
Batch size $N_d$ & $512$
\\
Learning rate
$\eta$ &  $0.001$
\\
First moment exponential decay rate $\beta_1$ & $0.9$
\\
Second moment exponential decay rate $\beta_2$ & $0.999$
\\
Numerical stability constant $\varepsilon$ & $10^{-8}$
\\
\hline
\end{tabular}
}
\end{table}
\vspace{-0.5cm}
\subsection{Convergence Analysis of Algorithms~\ref{alg:F_opt} and \ref{alg:alt_opt}}\label{sec:simul_converge}
To assess the performance of the proposed \ac{ao}-based Algorithm~\ref{alg:alt_opt}, Fig.~\ref{fig:alternate_bcrb_iter} illustrates the \ac{bcrb} versus number of iterations for \ac{sim} layer sizes $N^h=N^v\in\{5,6,7,8\}$. 
The results are averaged over $200$ Monte-Carlo trials. 
According to the results, the algorithm converges rapidly, within approximately $4$ iterations. 
\begin{remark}
In the simulations that follow, the optimized end-to-end \ac{sim} response vectors $\{\hat{\mathbf{f}}[i]\}_{i=1}^{I}$ produced by Algorithm~\ref{alg:alt_opt} are hereafter referred to as the optimal estimates of $\{\mathbf{f}[i]\}_{i=1}^{I}$. 
As detailed in Sec.~\ref{sec:sim_opt}, the \ac{sim} coefficients are trained to reproduce the beampatterns corresponding to these optimal vectors, i.e., to match the target beampatterns as closely as possible.

\end{remark}

\begin{figure}[!t]
\centering
\includegraphics[width=\columnwidth]{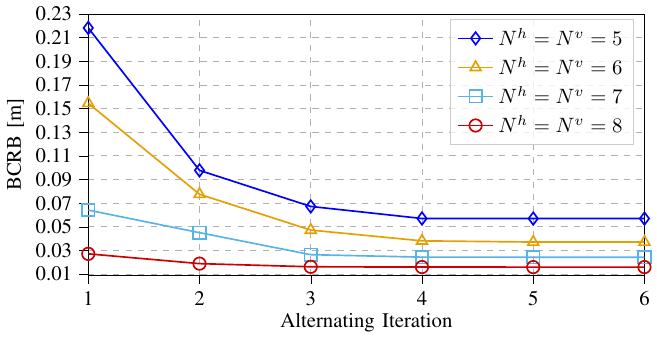}
\caption{
performance of the \ac{ao} based Algorithm~\ref{alg:alt_opt}.
}
\label{fig:alternate_bcrb_iter}
\end{figure}

To evaluate the performance of Algorithm~\ref{alg:F_opt} we initialize the vectors $\{\mathbf{f}[i]\}_{i=1}^{I}$ randomly, and fix the vectors $\{\mathbf{d}_j\}_{j=1}^{3}$ using the optimal values in the last iteration of Algorithm~\ref{alg:alt_opt}. 
Then, we execute Algorithm~\ref{alg:F_opt} to optimize $\{\mathbf{f}[i]\}_{i=1}^{I}$.
Fig.~\ref{fig:optF_bcrb_hist_semilogy} shows the \ac{bcrb} after sequentially optimizing each $\mathbf{f}[i]$, $i=1,\dots,I$, with Algorithm~\ref{alg:F_opt} for \ac{sim} layer sizes $N^h=N^v\in\{5,6,7,8\}$. 
The curves evidently decrease with the subcarrier index, which demonstrates the convergence and effectiveness of the proposed algorithm. 
As stated in Prop.~\ref{prop:inner_opt}, since each subcarrier can be optimized independently, the subcarrier update order does not affect the final \ac{bcrb} i.e., different ordering choices converge to the same solution. 

\begin{figure}[!t]
\centering
\includegraphics[width=\columnwidth]{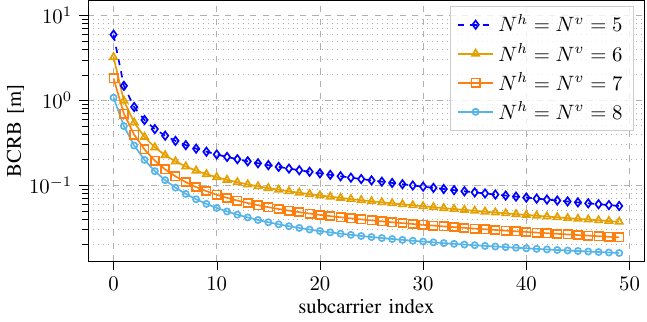}
\caption{
\ac{bcrb} values after optimizing $\mathbf{f}[i]$ for $i=1,\dots ,I$.
}
\label{fig:optF_bcrb_hist_semilogy}
\end{figure}

Fig.~\ref{fig:optF_r_energy_traces_semilogy} shows the relative objective $g(\mu)/\epsilon = \mathbf{v}(\mu)^{\mathsf{H}} \mathbf{R} \mathbf{v}(\mu)/\epsilon$ (i.e., relative interference energy) produced by the bisection subroutine in Algorithm~\ref{alg:F_opt}, plotted against the iteration index. 
Individual per-subcarrier objective values are plotted together with their mean. 
The results indicate that the proposed bisection routine converges to the optimal value $1$ within approximately $10$ iterations for every subcarrier, demonstrating the efficiency and robustness of the proposed method to satisfy the interference constraint \eqref{const:pu_interference} (or equivalent constraint \eqref{const:pu_rate}) for inner optimization of \eqref{opt:equivalent_v1}.

\begin{figure}[!t]
\centering
\includegraphics[width=\columnwidth]{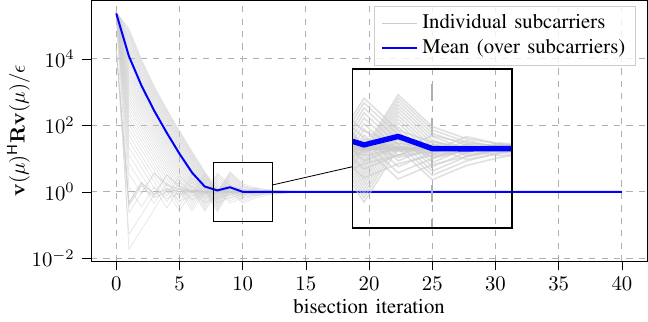}
\caption{
Relative interference energy per bisection iteration for different subcarriers. 
The average curve over all subcarriers is also shown. 
}
\label{fig:optF_r_energy_traces_semilogy}
\end{figure}

\vspace{-0.2cm}
\subsection{Training SIM and Convergence Analysis}
Algorithm~\ref{alg:phase_opt} is employed to train the \ac{sim} phase coefficients. 
Training is performed for $N_e=200$ epochs with $N_b=50$ batches per epoch, each batch containing $N_d=512$ samples. 
Detailed training parameters are given in Table~\ref{tab:train-params}. 
Let $\bm{\nabla}_{e,b}\in\mathbb{C}^{L\times N}$ denote the gradient computed at epoch $e$ and batch $b$, and define the epoch-averaged gradient norm $\bar{g}_e = \frac{1}{N_b}\sum_{b=1}^{N_b}\lVert \bm{\nabla}_{e,b}\rVert_F$. Fig.~\ref{fig:grad_norm_epoch} plots $\bar{g}_e$ versus epoch $e$ for $N^h=N^v\in\{6,8\}$ and $L\in\{2,4\}$, which demonstrate convergence of the training. 
The corresponding normalized beampattern difference between the \ac{sim} and the optimal beampattern are shown in Fig.~\ref{fig:bp_diff_hist}, further confirming convergence of the training process.

\begin{figure}
\centering
\begin{subfigure}{\columnwidth}
\centering
\includegraphics[width=\columnwidth]{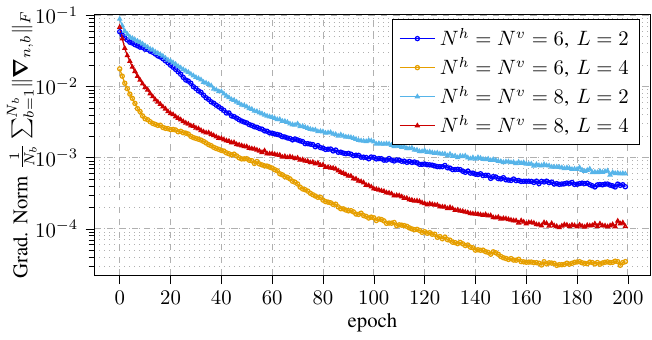}
\caption{Average gradient norms versus epoch to assess convergence behavior.}
\label{fig:grad_norm_epoch}
\end{subfigure}%
\hfill
\vspace{0pt}
\begin{subfigure}{\columnwidth}
\centering
\includegraphics[width=\columnwidth]{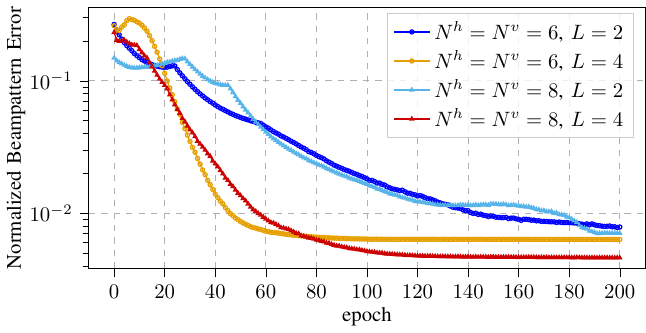}
\caption{Normalized beampattern error during training versus epoch.}
\label{fig:bp_diff_hist}
\end{subfigure}%
\caption{
Convergence analysis of the proposed Algorithm~\ref{alg:phase_opt}.
}
\label{fig:conv_analysis}
\end{figure}

\begin{figure*}[ht]
\centering
\includegraphics[width=2\columnwidth]{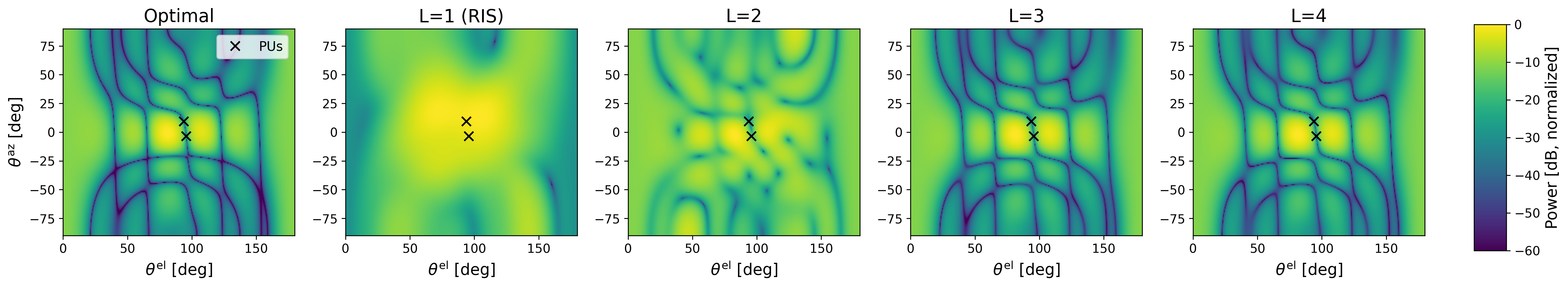}
\caption{
2D-Beampatterns of the trained \acp{sim} with fixed $N^h=N^v=6$ and different number of layers $L\in\{1,2,3,4\}$.
}
\label{fig:beampatterns_2d}
\end{figure*}

\begin{figure}
\centering
\begin{subfigure}{\columnwidth}
\centering
\includegraphics[width=\columnwidth]{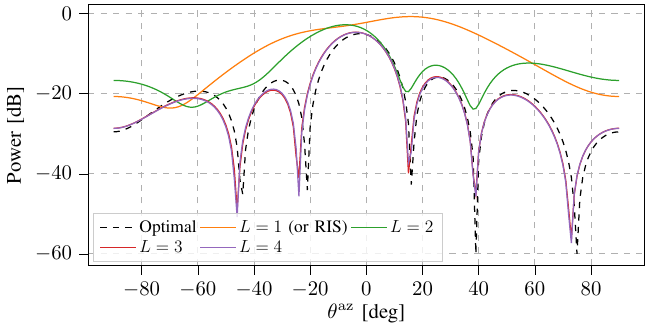}
\caption{1D beampattern comparison for fixed $\theta^{\text{el}}=90^\circ$}
\label{fig:beampatterns_1d_phi}
\end{subfigure}%
\hfill
\vspace{6pt}
\begin{subfigure}{\columnwidth}
\centering
\includegraphics[width=\columnwidth]{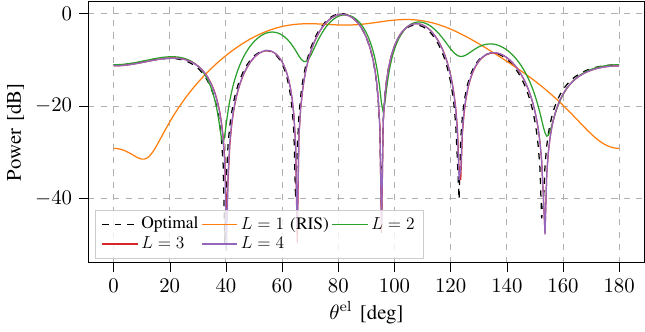}
\caption{1D beampattern comparison for fixed $\theta^{\text{az}}=0^\circ$}
\label{fig:beampatterns_1d_theta}
\end{subfigure}%
\caption{
Two 1D cuts of the 2D beampatterns to better compare the performance of the trained \ac{sim} for different number of layers.
}
\label{fig:1D-patterns}
\end{figure}
\vspace{-0.5cm}
\subsection{Beampattern Analysis}
Fig.~\ref{fig:beampatterns_2d} presents examples of the normalized 2D beampatterns for the first subcarrier  comparing the analytically obtained optimal response and the trained \ac{sim} for layer counts $L\in\{1,2,3,4\}$. 
Note that $L=1$ corresponds to a conventional \ac{ris} operating in diffraction mode. The 2D-\acp{aod} of the \acp{pu} are indicated in each subplot. 
For $L\in\{1,2\}$ the trained \ac{sim} shows a large mismatch to the optimal beampattern and does not sufficiently attenuate power at the \acp{pu}. 
By contrast, when $L\in\{3,4\}$ the trained \ac{sim} closely reproduces the optimal beampattern and the \acp{pu} 2D-\acp{aod} lie at the beampattern nulls. 
To facilitate comparison, in Fig.~\ref{fig:1D-patterns}, we plot two 1D cuts of the 2D beampatterns from Fig.~\ref{fig:beampatterns_2d} taken at fixed angles $ \theta^{\text{el}}=90^\circ$ and $ \theta^{\text{az}}=0^\circ$. 
The plots confirm that the \acp{sim} with $L\in\{3,4\}$ closely reproduce the optimal beampattern.

\vspace{-0.1cm}
\subsection{Localization and Spectral Efficiency Performance}
In this subsection, we evaluate the localization and \ac{se} performance of the proposed methods. 
Fig.~\ref{fig:bcrb_vs_P_dBm} shows the \ac{bcrb} obtained with trained \acp{sim} having $L=4$ layers versus $P_{\text{sws}}$ (defined in \eqref{def:p_sws}).
It is observed that the learned responses closely track the corresponding optimal values, demonstrating the performance of Algorithm~\ref{alg:phase_opt}. 
To quantify the effect of the number of layers, Fig.~\ref{fig:BCRB_vs_L} compares the \ac{bcrb} for trained \acp{sim} with $N^h=N^v\in\{6,8\}$ and $L\in\{1,2,3,4\}$ at a fixed power $P_{\text{sws}}=20\,\mathrm{dBm}$, showing that localization performance improves with $L$ and essentially converges to the optimal end-to-end response by $L=4$. 
Hence, for the considered system, a four-layer \ac{sim} is sufficient to achieve optimal localization.

\begin{figure}[ht]
\centering
\includegraphics[width=\columnwidth]{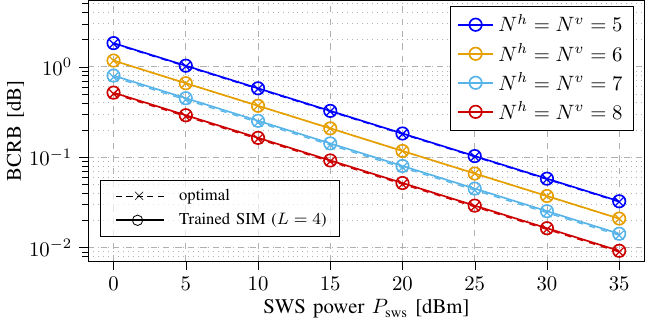}
\caption{
\ac{bcrb} values versus $P_{\text{sws}}$ using trained \acp{sim} with $L=4$ layers.
}
\label{fig:bcrb_vs_P_dBm}
\end{figure}

\begin{figure}[ht]
\centering
\includegraphics[width=\columnwidth]{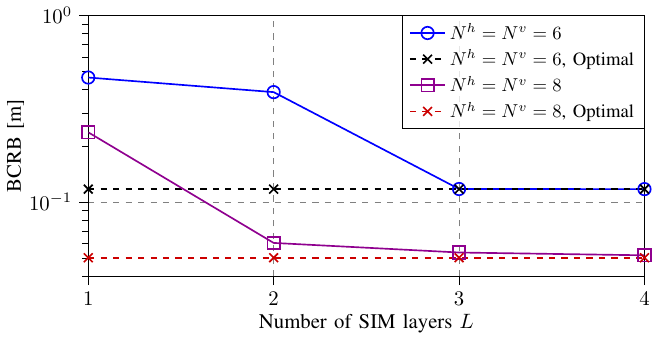}
\caption{
\ac{bcrb} of the trained \acp{sim} versus $L$ at $P_{\text{sws}}=20\,\mathrm{dBm}$.
}
\label{fig:BCRB_vs_L}
\end{figure}

\begin{figure}[ht]
\centering
\includegraphics[width=\columnwidth]{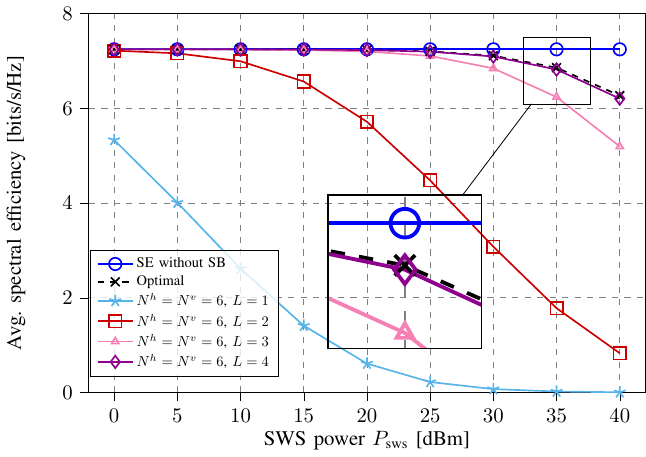}
\caption{
Average \ac{se} versus $P_{\text{sws}}$ at a fixed $P_\text{pb}=30\,\mathrm{dBm}$.
}
\label{fig:SE_vs_P_sws}
\end{figure}

The average \ac{se}, defined as $ \frac{1}{I}\sum_{i=1}^{I}\log\big(1+\mathrm{SINR}[i]\big)$ (with $\mathrm{SINR}[i]$ given in \eqref{eq:avg_SINR}), is plotted in Fig.~\ref{fig:SE_vs_P_sws} versus $P_{\text{sws}}$ for trained \acp{sim} with $N^h=N^v=6$ and $L\in\{1,2,3,4\}$ at a fixed \ac{pb} transmit power $P_{\text{pb}}=30\,\mathrm{dBm}$ where the optimal \ac{se} curve is also shown as a baseline. 
Evidently, the trained \ac{sim} with $L=4$ closely achieves the optimal performance. 

Likewise, the corresponding \ac{se} are shown in terms of \ac{pb} power $P_\text{pb}$ in Fig.~\ref{fig:SE_vs_PBS}. 
The \ac{sim} with $L=4$ achieves near-optimal \ac{se} across the considered power range, and \acp{sim} with $L\ge 2$ substantially outperform the single-layer \ac{ris}. 
These results demonstrate that multi-layer \acp{sim} are capable of learning and realizing the sophisticated multi-functional beampatterns required to jointly support \ac{su} localization and \ac{pu} interference management.

\begin{figure}[ht]
\centering
\includegraphics[width=\columnwidth]{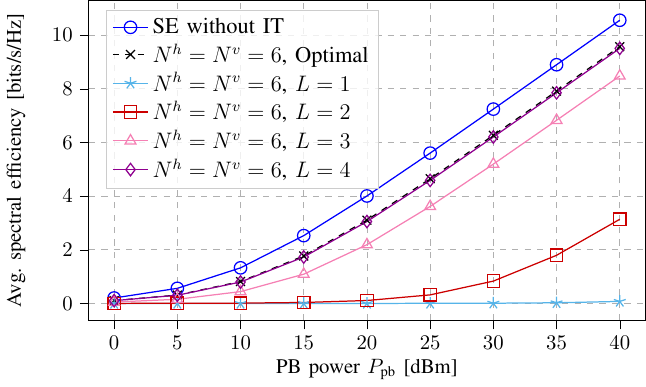}
\caption{
Average \ac{se} versus $P_{\text{pb}}$ at a fixed $P_\text{sws}=40\,\mathrm{dBm}$.
}
\label{fig:SE_vs_PBS}
\end{figure}
\vspace{-0.4cm}
\subsection{Robustness Under Different Number of PUs}\label{sec:robustness_Npu}
In this subsection we evaluate the proposed methods as the number of active \acp{pu} varies. 
We consider a set of nine candidate \acp{pu} whose locations are shown in Fig.~\ref{fig:Env_map}.
For each $N_{\text{pu}}\in\{1,3,5,7,9\}$ the first $N_{\text{pu}}$ positions are treated as active and the \ac{sim} is trained for that scenario. 
Fig.~\ref{fig:Num_PU_beampatterns_2d} presents the resulting 2D beampatterns with the corresponding 2D-\acp{aod} of the active \acp{pu} indicated. 
In every case the trained \ac{sim} places deep nulls at the active-\ac{pu} directions to satisfy the interference constraint, yielding distinct beampatterns as $N_{\text{pu}}$ changes. 
The corresponding \ac{bcrb} values are illustrated in Fig.~\ref{fig:bcrb_vs_Npu}.
As expected, increasing $N_{\text{pu}}$ generally increases the \ac{bcrb} (i.e., degrades localization accuracy) because more stringent \ac{pu} protection reduces the degrees of freedom available for localization. 
The magnitude of this degradation, however, is geometry-dependent: when active \acp{pu} lie close to or inside the \ac{su} uncertainty region (e.g., \acp{pu} 1, 2, 6, and 7 in Fig.~\ref{fig:Env_map}), the performance penalty is larger.

\begin{figure*}[ht]
\centering
\includegraphics[width=2\columnwidth]{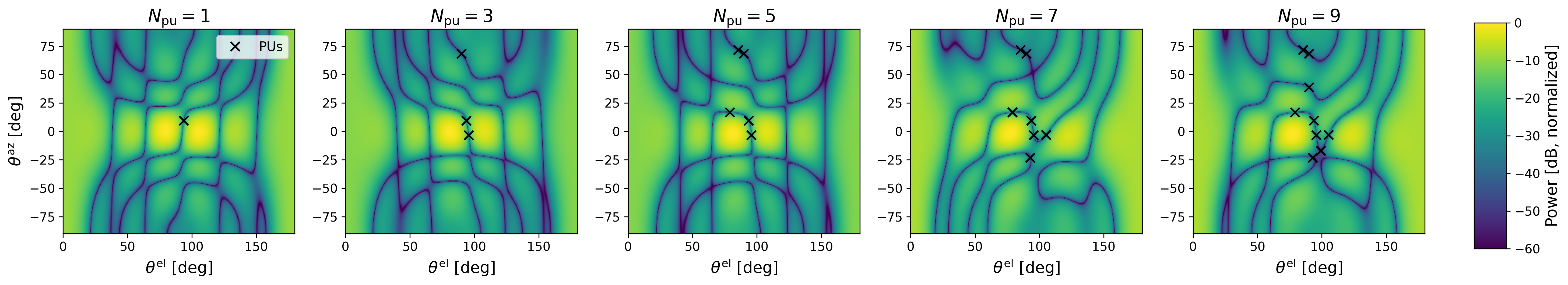}
\caption{
2D-Beampatterns of the trained \acp{sim} (with $L=4$ layers and $N^h=N^v=6$) for different number of active \acp{pu}. 
}
\label{fig:Num_PU_beampatterns_2d}
\end{figure*}

\begin{figure}[!t]
\centering
\includegraphics[width=\columnwidth]{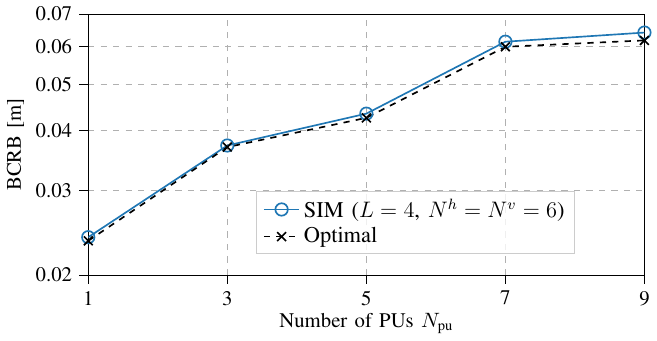}
\caption{
\ac{bcrb} versus number of \acp{pu} at $P_\text{sws}=30\,\mathrm{dBm}$.
}
\label{fig:bcrb_vs_Npu}
\end{figure}
\vspace{-0.4cm}
\section{Conclusion}
This paper investigated a multi-functional \ac{sim}-assisted \ac{cr}-\ac{isac} downlink framework. 
We integrated the \ac{sb} with a multi-layer \ac{sim} and adopted the \ac{bcrb} as the localization performance metric. 
We then formulated a \ac{bcrb}-minimization problem subject to an average \ac{pu} \ac{se} constraint that limits performance degradation to at most a few percent.
To design the \ac{sim} coefficients we cast the problem as a constrained nonconvex problem, reformulated it into a tractable max-min form and developed an efficient \ac{ao}-based algorithm (inner minimization via bisection and an outer step in closed form). 
Building on the \ac{sim}/\ac{dnn} analogy, we also derived analytical gradients for backpropagation and proposed an Adam-based, mini-batch learning scheme that uses a beampattern-matching loss to realize the optimal end-to-end response. 
Extensive numerical results confirm fast convergence of the proposed algorithms and demonstrate that multi-layer \acp{sim} (in particular $L=4$) can achieve near-optimal beampatterns and substantially outperform single-layer \ac{ris} designs for joint localization and interference management.
Future work will extend the present study to real-time scenarios and also account for practical implementation aspects, including hardware impairments, and develop maximum-likelihood estimators.

\vspace{-0.5cm}
\appendices
\section{Proof of Prop.~\ref{prop:problem_equivalence}}\label{app:FIM-derivatives} 
The \ac{sdr} version of the optimization problem \eqref{eq:opt_formulation_f} can be obtained as follows:
\begin{subequations}\label{eq:opt_formulation_sdr}
\begin{align}
\min_{\{\mathbf{F}[i]\}_{i=1}^I}\;&
\mathrm{BCRB}(\{\mathbf{F}[i]\}_{i=1}^I)
\\
&
\text{s.t.}\;
\epsilon[i]
\ge
\mathbf{R}_\text{pu}
\bullet
\mathbf{F}[i]
\label{const:pu_rate_sdr}
,\\
& 
\mathrm{tr}(\mathbf{F}[i])\le \delta[i],\ 
i=1,\dots ,I,
\label{const:F_pow_sdr}
\end{align}
\end{subequations}
where we have dropped the rank-1 constraints $\mathbf{F}[i]=\mathbf{f}[i]\mathbf{f}[i]^\mathsf{H}$ to make the problem convex. 
Note that:
\begin{equation}
\mathrm{BCRB}
=
\mathrm{tr}\bigl([\mathbf J_{B}^{-1}]_{1:3,\,1:3}\bigr)
=
\sum_{j=1}^{3}
\mathbf{e}_j^\mathsf{T}
\mathbf J_{B}^{-1}
\mathbf{e}_j.
\end{equation}
Next, we write the epigraph form of \eqref{eq:opt_formulation_sdr} as follows:
\begin{subequations}\label{eq:opt_formulation_sdr_epi}
\begin{align}
\min_{\{\mathbf{F}[i]\}_{i=1}^I,\ \{t_j\}_{j=1}^3}\;&
\sum_{j=1}^{3}
t_j
\\
&
\text{s.t.}\;
\begin{bmatrix}
\mathbf{J}_B\left(
\{\mathbf{F}[i]\}_{i=1}^I\right) & \mathbf{e}_j\\
\mathbf{e}_j^{\mathsf{T}} & t_{j}
\end{bmatrix}
\succeq 0,
\label{const:lmi_sdr}
\\
&
\ j=1,2,3,
\notag
\\
& 
\eqref{const:pu_rate_sdr},\ 
\eqref{const:F_pow_sdr}
\notag
\end{align}
\end{subequations}
where we have used the Schur complement technique \cite{fadakar2025hybrid, Attiah2024duality} to obtain the \ac{lmi} \eqref{const:lmi_sdr} which is equivalent to $t_j\ge \mathbf{e}_j^\mathsf{T}
\mathbf J_{B}^{-1}
\mathbf{e}_j$. 
The Lagrangian of the problem with respect to \eqref{const:lmi_sdr} is defined as:
\begin{equation}\label{def:lag}
\mathcal{L}
=
\sum_{j=1}^{3}
t_j
-
\sum_{j=1}^{3}
\mathbf{D}_{j}\bullet 
\mathbf{B}_{j}
\end{equation}
where $\mathbf{B}_{j}$ is the matrix in \eqref{const:lmi_sdr}, and $\mathbf{D}_{j}\succeq 0$ denotes the dual variables with respect to the \ac{lmi} \eqref{const:lmi_sdr}. 
By partitioning $\mathbf{D}_{j}$ as
$$
\mathbf{D}_{j}
=
\begin{bmatrix}
\overline{\mathbf{D}}_j & -\mathbf{d}_j\\
-\mathbf{d}_j^{\mathsf{T}} & d_j
,
\end{bmatrix},
$$
the Lagrangian \eqref{def:lag} can be rewritten as:
\begin{equation}\label{def:lag_rep}
\mathcal{L}
=
\sum_{j=1}^{3}
t_j(1-d_j)
-
\sum_{j=1}^{3}
\overline{\mathbf{D}}_{j}\bullet 
\mathbf{J}_{B}
+
2
\sum_{j=1}^{3}
\mathbf{e}_j^\mathsf{T}
\mathbf{d}_j.
\end{equation}
The dual function is found by minimizing the Lagrangian over the primal variables, $\{\mathbf{F}[i]\}_{i=1}^{I}$ and $\{t_j\}_{j=1}^{3}$. 
By setting the derivative with respect to $t_j$ to zero, we obtain $d_j=1$, because otherwise the minimum is unbounded.
Applying the Schur complement to the constraint $\mathbf{D}_{j}\succeq 0$, we obtain the equivalent constraint $\overline{\mathbf{D}}_j\succeq \mathbf{d}_j\mathbf{d}_j^\mathsf{T}$. 
Thus, the dual problem is obtained as:
\begin{subequations}\label{def:max_min}
\begin{align}
\max_{\{\overline{\mathbf{D}}_j\}_{j=1}^{3}}\,
\min_{\{\mathbf{F}[i]\}_{i=1}^{I}}
&-
\sum_{u=1}^{3}
\overline{\mathbf{D}}_{u}
\bullet 
\mathbf{J}_{B}
+
2
\sum_{u=1}^{3}
\mathbf{e}_j^\mathsf{T}
\mathbf{d}_j,
\\
& 
\text{s.t.}\;
\eqref{const:pu_rate_sdr},\ 
\eqref{const:F_pow_sdr}
\notag
\end{align}
\end{subequations}
Hence, since strong duality holds, the duality gap is zero, meaning the optimal values of the primal and dual problems are equal. 
This allows us to interchange $\min$ and $\max$ in \eqref{def:max_min} and first find the optimal $\overline{\mathbf{D}}_{j}$ \cite{boyd2004convex}. 
Since $\overline{\mathbf{D}}_j\succeq \mathbf{d}_j\mathbf{d}_j^\mathsf{T}$ and $\mathbf{J}_B\succeq 0$, we have $\overline{\mathbf{D}}_j\bullet \mathbf{J}_B\ge \mathbf{d}_j\mathbf{d}_j^\mathsf{T}\bullet \mathbf{J}_B$, from which we conclude that at the optimum we have $\overline{\mathbf{D}}_{j}=\mathbf{d}_j\mathbf{d}_j$. 
Moreover, after substituting this result back into \eqref{def:max_min}, we obtain \eqref{opt:equivalent_v1}. 
By substituting \eqref{eq:FIM_entry} in \eqref{def:J_B}, and then \eqref{def:J_B} in \eqref{obj:equivalent_v1}, \eqref{opt:equivalent_v1} can be represented as \eqref{opt:equivalent_v2}. 
By \cite[Prop.~3.5]{Huang2010Rank}, for any fixed vectors $\{\mathbf{d}_j\}_{j=1}^{3}$ problem \eqref{opt:equivalent_v2} admits rank-one optimal solutions of the form $\hat{\mathbf{F}}[i]=\hat{\mathbf{f}}[i]\hat{\mathbf{f}}[i]^{\mathsf{H}}$, which completes the proof.

\vspace{-0.2cm}
\section{Proof of Prop.~\ref{prop:inner_opt}}\label{app:inner_opt}
We rewrite the inner optimization problem as follows: 
\begin{align} \label{eq:inner} &\max_{\mathbf{f}\in\mathbb{C}^{N}}\quad \mathbf{f}^{\mathsf{H}} 
\mathbf{A} 
\mathbf{f} \ \nonumber &\text{s.t.}\quad \mathbf{f}^{\mathsf{H}} 
\mathbf{R} 
\mathbf{f} 
\le 
\epsilon,\ \nonumber &\qquad\qquad\lVert\mathbf{f}\rVert^2 \le \delta.
\end{align} 
%
We form the Lagrangian with multipliers $\mu\ge0$ (for the \ac{pu} constraint) and $\lambda\ge0$ (for the power constraint): 
\begin{equation} \mathcal L(\mathbf f,\mu,\lambda)=\mathbf f^{\mathsf{H}}\mathbf A\mathbf f - \mu(\mathbf f^{\mathsf{H}}\mathbf R\mathbf f - \epsilon) - \lambda(\mathbf f^{\mathsf{H}}\mathbf f - \delta). 
\end{equation} 
After setting the derivative with respect to $\mathbf{f}$ to zero we obtain: \begin{equation}\label{eq:stationarity} (\mathbf{A} - 
\mu
\mathbf{R} - 
\lambda\mathbf{I})\hat{\mathbf{f}}
= 
\mathbf{0}. 
\end{equation} 
From complementary slackness and feasibility: 
\begin{align} 
\mu\big(\hat{\mathbf{f}}^{\mathsf{H}}\mathbf{R}\hat{\mathbf{f}} - \epsilon\big) 
= 0,\ 
\lambda\big(\hat{\mathbf{f}}^{\mathsf{H}}
\hat{\mathbf{f}} 
- 
\delta
\big) 
&
= 
0,
\end{align}
Next, We consider three cases:
\vspace{-0.3cm}
\subsection{Case 1: $\lambda_{\max}(\mathbf{A})\le 0$} 
In this case it is easy to observe that $\hat{\mathbf{f}}=\mathbf{0}$.
\vspace{-0.3cm}
\subsection{Case 2: $\lambda_{\max}(\mathbf A)\ge 0$ with feasible principal eigenvector} 
Let $\mathbf{u}$ be a principal unit eigenvector of $\mathbf A$ associated to $\lambda_{\max}(\mathbf A)$. 
If 
$
\mathbf{u}^{\mathsf{H}} 
\mathbf{R} 
\mathbf{u} 
\le 
\frac{\epsilon}{\delta}, 
$
then $\hat{\mathbf{f}}=\sqrt{\delta}\,\mathbf u$ is feasible and attains the objective value $\delta\,\lambda_{\max}(\mathbf A)$, and the \ac{pu} multiplier $\mu$ is zero.
\vspace{-0.3cm}
\subsection{Case 3: $\lambda_{\max}(\mathbf A)\ge 0$ with infeasible principal eigenvector} 
If the inequality in the previous case is violated i.e., $
\mathbf{u}^{\mathsf{H}} 
\mathbf{R} 
\mathbf{u} 
>
\frac{\epsilon}{\delta}
$, the \ac{pu} interference constraint must be active at optimum (i.e. $\mu>0$) \cite{boyd2004convex}. 
Since $\hat{\mathbf{f}}\neq\mathbf{0}$ then \eqref{eq:stationarity} implies $\hat{\mathbf{f}}$ lies in the nullspace of the Hermitian matrix $\mathbf{A} - \mu\mathbf{R} - \lambda\mathbf{I}$, 
or equivalently $\hat{\mathbf{f}}$ is an eigenvector of $\mathbf{A}-\mu\mathbf{R}$ with eigenvalue $\lambda$.
Thus, if we define $\mathbf{v}=\mathbf{f}/\sqrt{\delta}$, we conclude that the optimal unit-norm direction $\mathbf{v}$ is the principal eigenvector of the Hermitian matrix: 
$
\mathbf{M}(\mu) \triangleq \mathbf A - \mu\mathbf R, 
$
and the equality constraint to enforce is:
\begin{equation}\label{eq:root_cond}
\mathbf{v}(\mu)^{\mathsf{H}} 
\mathbf{R} 
\mathbf{v}(\mu) 
= 
\frac{\epsilon}{\delta}.
\end{equation} 
Thus the problem reduces to finding a scalar $\mu\ge0$ such that the principal eigenvector $\mathbf{v}(\mu)$ of $\mathbf{M}(\mu)$ satisfies \eqref{eq:root_cond}. 
Once $\mu$ is found, the optimizer is $\hat{\mathbf{f}} = \sqrt{\delta}\,\mathbf v(\mu)$.

We propose to use bisection root-finding algorithm to the scalar function
$
h(\mu)\;=\;g(\mu)-\epsilon/\delta.
$
Under the usual non-degeneracy assumption that the principal eigenvalue of $\mathbf M(\mu)$ is simple, $\mathbf v(\mu)$ (and hence $g(\mu)$) depends continuously on $\mu$, so $h(\mu)$ is continuous.  We therefore first obtain a bracket $[\mu_\ell,\mu_u]$ (using $\mu_\ell=0$ and $\mu_u=1$ as initial values) by doubling $\mu_u\leftarrow 2\mu_u$ until $h(\mu_u)\le0$ (i.e. $g(\mu_u)\le\epsilon/\delta$), which ensures $h(\mu_\ell)\ge0$ and $h(\mu_u)\le0$.  Bisection method then iterates by evaluating $g(\mu)$ at $\mu=\frac{\mu_\ell+\mu_u}{2}$,
and replaces the left or right endpoint according to the sign of $h(\mu)$, until $|h(\mu)|<\xi_{\rm tol}$.
The details are also presented in Algorithm~\ref{alg:F_opt}.

\vspace{-0.2cm}
\section{Proof of Prop.~\ref{prop:outer}}\label{app:outer_opt}
The objective in \eqref{eq:outer_problem_rev} is the sum of three decoupled quadratic functions
$
\ell_j(\mathbf d_j)=2\mathbf e_j^\mathsf{T}\mathbf d_j-\mathbf d_j^\mathsf{T}\mathbf J_B\mathbf d_j.
$
Setting the derivative to zero, we obtain the linear normal equation
$
2\mathbf{e}_j - 2\mathbf J_B\mathbf d_j=\mathbf{0}$ which can be rewritten as $\mathbf J_B\mathbf d_j=\mathbf e_j.
$
Since $\mathbf{J}_B$ is positive definite, the linear system has the unique solution \eqref{eq:d_star_5}. 

\vspace{-0.5cm}
\section{Proof of Lemma~\ref{lemma:trace}}\label{app:lemma_trace}
First we partition $\mathbf{J}_B$ as 
\[
\mathbf{J}_B=
\begin{bmatrix}
\mathbf{U} & \mathbf{V} \\
\mathbf{V}^\mathsf{T} & \mathbf{W}
\end{bmatrix},
\]
where $\mathbf{U}\in\mathbb{R}^{3\times 3}$, $\mathbf{V}\in\mathbb{R}^{3\times 2}$ and $\mathbf{W}\in\mathbb{R}^{2\times 2}$.
Since $\mathbf{J}_B\succ 0$, the block $\mathbf{W}$ is invertible and the Schur complement of $\mathbf{W}$ is defined as
$
\mathbf{S}=\mathbf{U}-\mathbf{V}\mathbf{W}^{-1}\mathbf{V}^\mathsf{T}.
$
By standard block-inverse formula the top-left $3\times3$ block of $\mathbf{J}_B^{-1}$ equals $\mathbf{S}^{-1}$; equivalently,
\begin{equation}\label{eq:topblock}
[\mathbf{J}_B^{-1}]_{1:3,1:3}=\mathbf{S}^{-1}
\rightarrow
\operatorname{tr}\!\big([\mathbf{J}_B^{-1}]_{1:3,1:3}\big)=\operatorname{tr}(\mathbf{S}^{-1}).
\end{equation}
Let the eigenvalues of $\mathbf{S}$ be $\mu_1,\mu_2,\mu_3>0$. Then
$\operatorname{tr}(\mathbf{S})=\sum_{i=1}^3\mu_i$ and
$\operatorname{tr}(\mathbf{S}^{-1})=\sum_{i=1}^3\mu_i^{-1}$. By the Cauchy-Schwarz inequality:
\[
\Big(\sum_{i=1}^3\mu_i\Big)\Big(\sum_{i=1}^3\frac{1}{\mu_i}\Big)
\ge \Big(\sum_{i=1}^3 1\Big)^2 = 9,
\]
which gives
\begin{equation}\label{eq:csS}
\operatorname{tr}(\mathbf{S}^{-1}) \ge \frac{9}{\operatorname{tr}(\mathbf{S})}.
\end{equation}
Because $\mathbf{V}\mathbf{W}^{-1}\mathbf{V}^T\succeq 0$ we have $\mathbf{S}\preceq \mathbf{U}$ in the \ac{psd} order, and therefore
\begin{equation}\label{eq:trace-order-U}
\operatorname{tr}(\mathbf{S}) \le \operatorname{tr}(\mathbf{U}).
\end{equation}
Combining \eqref{eq:topblock}, \eqref{eq:csS} and \eqref{eq:trace-order-U} yields
\[
\operatorname{tr}\!\big([\mathbf{J}_B^{-1}]_{1:3,1:3}\big)
= \operatorname{tr}(\mathbf{S}^{-1})
\ge \frac{9}{\operatorname{tr}(\mathbf{S})}
\ge \frac{9}{\operatorname{tr}(\mathbf{U})},
\]
which proves the claim. 

\vspace{-0.1cm}
\bibliographystyle{IEEEtran}
\bibliography{Bib}
\end{document}